\newtheorem{theorem}{Theorem}
\renewcommand{\c}[1]{\mathcal{#1}}
\newcommand{\1}{\mathbbm{1}} 
\newcommand{\idty}{\1}
\DeclareMathOperator*{\tr}{Tr}
\renewcommand{\>}{\rangle}
\providecommand{\abs}[1]{|#1|}
\newcommand{\beq}{\begin{equation}}
\newcommand{\eeq}{\end{equation}}
\newcommand{\ket}[1]{\ensuremath{\left|{#1}\right\rangle}}
\newcommand{\bra}[1]{\ensuremath{\left\langle{#1}\right |}}
\renewcommand{\rho}{\varrho}
\begin{document}

\title{Device-independent quantum reading and \\
       noise-assisted quantum transmitters}
\author{W. Roga$^{1}$, D. Buono$^{1}$, and F. Illuminati$^{1,2,3}$\footnote{Corresponding author: illuminati@sa.infn.it}}
\affiliation{$^1$ Dipartimento di Ingegneria Industriale, Universit\`a degli Studi di
Salerno, Via Giovanni Paolo II 132, I-84084 Fisciano (SA), Italy}
\affiliation{$^2$ INFN, Sezione di Napoli, Gruppo collegato di Salerno, I-84084 Fisciano
(SA), Italy}

\begin{abstract}
In quantum reading, a quantum state of light (transmitter) is applied to
read classical information. In the presence of noise or for sufficiently
weak signals, quantum reading can outperform classical reading by enhanced
state distinguishability. Here we show that the enhanced quantum efficiency
depends on the presence in the transmitter of a particular type of quantum
correlations, the discord of response. Different encodings and transmitters
give rise to different levels of efficiency.
Considering noisy quantum probes we show that squeezed thermal
transmitters with non-symmetrically distributed noise among the field modes
yield a higher quantum efficiency compared to coherent
thermal quantum states.
The noise-enhanced quantum advantage is a
consequence of the discord of response being a non-decreasing function of
increasing thermal noise under constant squeezing, a behavior that leads to
an increased state distinguishability. We finally show that, for non-symmetric
squeezed thermal states, the probability of error, as measured by the quantum Chernoff bound,
vanishes asymptotically with increasing {\em local} thermal noise at finite {\em global}
squeezing. Therefore, at {\em fixed} finite squeezing, noisy but strongly discordant quantum states
with large noise imbalance between the field modes can outperform noisy classical resources
as well as pure entangled transmitters with the same finite level of squeezing.
\end{abstract}

\date{December 02, 2014}
\maketitle

\section{Introduction}

In the context of quantum information and quantum technology the idea of
reading classical data by means of quantum states arises quite naturally~\cite{Pirandola2011a,Pirandola2011b}. In general, the standard
implementations of reading are based on optical technologies: the task is
the readout of a digital optical memory, where information is stored by
means of the optical properties of the memory cells that are in turn probed
by shining light, e.g. a laser beam, on them. The probing light is usually
denoted as the \emph{transmitter}. Interesting features arise in the regime
in which the transmitter has to be treated quantum mechanically. The maximum
rate of reliable readout defines the quantum reading capacity~\cite{Pirandola2011b}. The latter
can overcome the classical reading capacity, obtained by probing with
classical light, in several relevant settings. The (possibly quantum)
transmitter that is needed to extract the encoded information is prepared in
some initial state. By scanning a particular cell the transmitter changes
its properties in a way depending on the cell. The task is to recognize
which cell occurs based on the output state of the transmitter after it has
been detected and measured. Therefore the problem of reading reduces to the
problem of distinguishing the output states of the transmitter.

In such optical settings one needs to consider two main coding protocols
depending on the trade off between energy and coherence of the transmitters
and the channels that are being used. The first protocol is the so-called
\emph{amplitude shift keying} (ASK) in which the changes in the state of the
transmitter are caused by the cell-dependent losses in the intensity of the
transmitted signal~\cite{Pirandola2011a,Pirandola2011b,Lupo2013,Spedalieri2012,Tej2013}. The second main
protocol is the so-called \emph{phase shift keying} (PSK)~\cite{Hirota2011,Dallarno2012,Nair2011, Guha2013, Bisio2011,Dallarno2012a}.
This is a type of coding which does not produce
energy dissipation. On the other hand, it requires a very high coherence of
the transmitter, a feature that might be realized in realistic implementations~\cite{Dallarno2012}.

If the transmitter is quantum, the cells play the role of effective quantum
channels. The ASK protocol then corresponds to a dissipative channel coding,
while the PSK protocol is a particular case of unitary coding corresponding to a unitary channel.
Within the ASK protocol, it can be shown that in the low-energy regime there is an energy
threshold above which the maximally entangled transmitter, i.e. a two-mode
squeezed state, yields a better reading efficiency than any of the classical states with the same
energy~\cite{Pirandola2011a,Pirandola2011b,Spedalieri2012}. The general result is still
valid in the presence of some noise-induced decoherence. Within the ASK
protocol, coding is then realized by local channels corresponding to cells
with different reflectivities.

In the PSK protocol, the coding is realized by means of local unitary
operations, specifically local phase shifts~\cite{Hirota2011,Dallarno2012}. In the ideal,
noise-free protocol the transmitter is taken to be in a pure Gaussian
quasi-Bell state, i.e. a Bell-type superposition of quasi-orthogonal
coherent states. In this scheme, the resulting quantum advantage is
absolute, in the sense that quantum reading of the classical information
encoded via a phase shift of $\pi$ is achieved with vanishing error, while
any classical state of the transmitter always yields a finite error
probability.

In both the ASK and PSK protocols the transmitter is assumed to be a
bipartite system such that only one part of it scans the memory cell. This
choice is motivated by the fact that it maximizes distinguishability at the
output when the state of the transmitter is quantum. As already mentioned,
the reading efficiency is characterized by the probability of error.
Information is encoded in binary memory cells with indices $0$ and $1$. It
is thus written using only two local channels that are assumed to occur with
equal \emph{a priori} probabilities. Given the bi-partite input transmitter
$\rho _{AB}$, the two possible output states will be denoted by
$\rho_{AB}^{(0)}$ and $\rho _{AB}^{(1)}$.

The probability of error in distinguishing the two output states when
reading a memory cell by means of the same input $\rho_{AB}$ is given by the
well-known Helstrom formula \cite{Helstrom1976}:
\begin{equation}
P_{err} = \frac{1}{2} - \frac{1}{4}d_{Tr}\left(\rho_{AB}^{(0)},\rho_{AB}^{(1)}\right) \, ,
\label{Helstrom}
\end{equation}
where $d_{Tr} \equiv \|\rho_{AB}^{(0)} - \rho_{AB}^{(1)}\|_{Tr}$ is the trace distance,
with $\|X\|_{Tr}=\tr\sqrt{XX^{\dagger}}$. With our normalization convention, the trace distance ranges from
$0$ to a maximum of $2$ for orthogonal pure states.

In the original reading protocols the goal is to minimize $P_{err}$ over the
set of possible transmitter states $\rho_{AB}$ at fixed encoding in the
memory cells~\cite{Pirandola2011b,Dallarno2012}. The problem is thus
dependent on the type of memory device being used.

Here instead we wish to
provide a device-independent characterization of a given transmitter by
considering the worst-case scenario that maximizes the probability of error
$P_{err}$ over all possible codings. Once the worst-case scenario is
identified, one can then compare different classes of transmitter states to
identify the ones that minimize the maximum probability of error $P_{err}^{(\max)}$.

We will show
that the maximum probability of error $P_{err}^{(\max)}$ is a monotonically decreasing function of
the amount of quantum correlations present in the transmitter state
$\rho_{AB}$, as quantified by a recently introduced measure of quantum correlations,
the so-called discord of response~\cite{Roga} in its Gaussian version~\cite{Buono2014}
(for general reviews on quantum correlations and distinguishability of quantum states and
on discord-like correlations see~\cite{Spehner2014,Modi2012}). As a
consequence, every state with non-vanishing discord of response is able to
read any type of memory device with maximal $P_{err}<1/2$. On the other
hand, for each classical transmitter, i.e. for transmitter with
vanishing discord of response, there will always exist at least one memory
device which is completely invisible, i.e. for which $P_{err}=1/2$.
In these considerations we exclude the situations when two channels are chosen to be
arbitrarily similar. In this case the probability of error always approaches $1/2$
independently on the chosen transmitter.

In Sec.~\ref{errorprobdiscord} we derive the exact analytical relation
between the maximum probability of error and the Gaussian discord of response. In Sec.~\ref{Gausstrans}
we discuss the properties of classical and quantum Gaussian transmitters, comparing squeezed-thermal, thermal-squeezed
and coherent-thermal states; in subsection \ref{GausstransA} we derive upper and lower bounds on
the maximum probability of error, and in subsection \ref{phaseshiftreading} we identify the unitary coding
that maximizes the quantum Chernoff bound, namely the upper bound on the maximum
probability of error, for the classes of quantum Gaussian transmitters considered.
The unitary channel which maximizes the bounds on the probability of error turns out to be a particular PSK coding,
namely the unique traceless one, which is realized for a $\pi/2$ phase shift.

In Sec.~\ref{comparisonclass} we compare the
performance of Gaussian quantum states of light with classical states (coherent thermal states).
We show that strongly discordant squeezed thermal states possess a higher
reading efficiency than the corresponding classical states of light, that is
non-discordant Gaussian coherent thermal transmitters with the same total number
of photons (fixed energy). This realizes an important instance of
noise-enhanced quantum advantage over the corresponding noisy classical resources.

In Sec.~\ref{comparisonquant} we compare different families of discordant
Gaussian states, the squeezed thermal and the thermal squeezed states at fixed total number of photons
or squeezing. While for both classes of states the entanglement obviously
decreases with increasing thermal noise, we show that for squeezed thermal
states the discord is an increasing function of the number of thermal
photons at fixed squeezing, while the opposite holds for thermal squeezed states. Moreover,
for squeezed thermal transmitters, the quantum Chernoff bound is independent
of thermal noise. As a consequence, this type of transmitter plays a privileged
role in the considered class of quantum Gaussian resources because the associated
quantum efficiency is either enhanced or unaffected by increasing the thermal
noise. Thus squeezed thermal transmitters realize an instance of noise-enhanced
or noise-independent quantum resources at fixed squeezing. Both in the classical-to-quantum
comparison and in the comparison of different quantum resources, the key enhancement of
quantum advantage is realized in the situation of strongest asymmetry of the distribution
of thermal noise among the field modes: {\em local} noise enhancement leads
to {\em global} enhancement of quantum correlations.

The main results are summarized and some outlook perspectives for future work are discussed
in Sec.~\ref{summary}. Detailed calculations and auxiliary reasonings are reported in four Appendixes.

\section{Probability of error, bounds, and discord of response}

\label{errorprobdiscord}

In protocols of quantum reading with unitary coding the
two local channels acting on the input probe state $\rho_{AB}$ are unitary,
and denoted as $U^{(0)}_A$ and $U^{(1)}_A$. Therefore, in this type of protocol
$\rho_{AB}^{(0)} = U_A^{(0)}\rho_{AB}U_A^{(0)\dagger}$ and
$\rho_{AB}^{(1)} = U_A^{(1)}\rho_{AB}U_A^{(1)\dagger}$, so that the
probability of error reads
\begin{equation}
P_{err} = \frac{1}{2} - \frac{1}{4}d_{Tr}\left(U_A^{(0)}\rho_{AB}U_A^{(0)\dagger},
U_A^{(1)}\rho_{AB}U_A^{(1)\dagger}\right) \, .
\end{equation}
Since the trace norm is invariant under local unitary transformations, one
has equivalently
\begin{equation}
P_{err} = \frac{1}{2} - \frac{1}{4}d_{Tr}\left(\rho_{AB}, \widetilde{\rho }_{AB} \right) \, ,
\label{proberr}
\end{equation}
where $\widetilde{\rho }_{AB} = W_{A} \rho _{AB} W_{A}^{\dagger }$ and $W_A = U_A^{(0)\dagger}U_A^{(1)}$ is still a local unitary transformation
acting on the transmitted subsystem $A$. The absolute upper bound for the
probability of error is thus $1/2$, corresponding to a situation in which there
is no way to distinguish the two output states and therefore the memory
device becomes completely invisible to the transmitter.

In general, computing the trace distance proves to be extremely challenging~\cite{Audenaert2007}, even more so for Gaussian states of infinite-dimensional continuous-variable systems~\cite{Pirandola2008}. Therefore one has to look for analytically computable {\em a priori} upper and lower bounds. A natural upper bound on the probability of error, Eq.~(\ref{Helstrom}), in distinguishing two states $\rho_1$ and $\rho_2$ occurring with the same probability is provided by the quantum Chernoff bound $QCB$~\cite{Calsamiglia2008}:
\begin{equation}
P_{err} \leq QCB \equiv \frac{1}{2} \left[ \inf_{t\in (0,1)} \tr (\rho_1^{t}\rho_2^{1-t}) \right] \; .
\label{QCB}
\end{equation}
If the states $\rho_1$ and $\rho_2$ are not arbitrary, but they are qubit-qudit states related by a local, single-qubit unitary transformation, as in the case of the quantum reading protocol with unitary coding, for which $\rho_1 = \rho_{AB}$ and $\rho_2 = \widetilde{\rho }_{AB}$, then the quantum Chernoff bound $QCB$ is achieved for $t=1/2$ in Eq.~(\ref{QCB}), as discussed in Appendix~\ref{chbm}:
\begin{equation}
QCB = \frac{1}{2}\left[ \tr \left( \sqrt{\rho _{AB}} \sqrt{\widetilde{\rho }_{AB}} \right) \right] \; ,
\label{QCBLU}
\end{equation}
and its expression coincides with the quantum Bhattacharyya coefficient \cite{Pirandola2008,Sihui2008}, which provides an upper bound on $QCB$ for
arbitrary quantum states. The same result, Eq.~(\ref{QCBLU}), holds for Gaussian states related by traceless local symplectic transformations, as discussed in the following sections and in Appendix~\ref{chbm}.

Next, considering the Uhlmann fidelity yields a complete hierarchy of lower and upper bounds~\cite{Pirandola2008}:
\begin{equation}
LBP_{err} \leq P_{err} \leq QCB \; ,
\label{errorineq}
\end{equation}
where the lower bound on the probability of error $LBP_{err} \equiv (1-\sqrt{1-{\cal{F}}})/2$, and
the Uhlmann fidelity ${\cal{F}}$ between two quantum states $\rho_1,\rho_2$ is defined as
$\c F(\rho_1,\rho_2) \equiv \left( \tr{\sqrt{\sqrt{\rho_1}\rho_2\sqrt{\rho_1}}} \right)^2$.

For the quantum reading protocol with unitary coding, let us consider the maximum probability of error
in distinguishing the output of a binary memory cell encoded using one identity
and one arbitrary unitary channel $W_{A}$ chosen in the set of local unitary
operations with non-degenerate harmonic spectrum. The latter is the spectrum
of the complex roots of the unity and its choice is motivated by observing
that it excludes unambiguously the identity from the set of possible
operations: indeed, unitary operations with harmonic spectrum are orthogonal
(in the Hilbert-Schmidt sense) to the identity. We further assume that the
coding is unbiased, that is the two channels are equiprobable.

The worst-case scenario is defined by the probability of error Eq.~(\ref{proberr}) being the largest possible:
\begin{equation}
P_{err}^{(\max )}\equiv \max_{\{W_{A}\}}P_{err}=\frac{1}{2}-\frac{1}{4} \min_{\{W_{A}\}}d_{Tr}\,\left( \rho _{AB},\widetilde{\rho }_{AB}\right) .
\label{PerrMax}
\end{equation}

Let us now consider a recently introduced measure of quantum correlations, the so-called
discord of response~\cite{Roga}:
\begin{equation}
{\mathcal{D}}_{R}^{x}(\rho _{AB})\equiv \min_{\{W_{A}\}}{\cal{N}}_x^{-1}d_{x}^{2}\left(\rho_{AB},\widetilde{\rho }_{AB}\right) \, ,
\label{DiscOfResp}
\end{equation}
where the index $x$ denotes the possible different types of well behaved, contractive metrics under completely positive and trace-preserving
(CPTP) maps. The normalization factor ${\cal{N}}_x$ depends on the given metrics and is chosen in such a way to assure that ${\mathcal{D}}_{R}^{x}$
varies in the interval $[0,1]$. Finally, the set of local unitary operations $\{W_A\}$ includes all and only those local unitaries with
harmonic spectrum.

In the following, we will need to consider both the probability of error and different types of upper and lower bounds on it. Therefore we will be concerned with three different discords of response corresponding to three types of contractive distances: trace, Hellinger, and Bures.

The trace distance $d_{Tr}$
between any two quantum states $\rho_1$ and $\rho_2$ is defined as:
\begin{equation}
d_{Tr}\left( \rho_1, \rho_2 \right) \equiv \tr \left[ \sqrt{ \left( \rho_1 - \rho_2 \right)^{2}} \right] \, .
\label{tracedistance}
\end{equation}
The Bures distance, directly related to the fidelity ${\cal{F}}$, is defined as:
\begin{equation}
d_{Bu}\left( \rho_1, \rho_2 \right) \equiv \sqrt{ 2 \left( 1 - \sqrt{ {\cal{F}}(\rho_1,\rho_2) } \right) } \; \, .
\label{Buresdistance}
\end{equation}
Finally, the Hellinger distance is defined as:
\begin{equation}
d_{Hell}\left( \rho_1, \rho_2 \right) \equiv \sqrt{ \tr \left[ \big( \sqrt{\rho_1} - \sqrt{\rho_2} \big)^{2} \right] } \; \, .
\label{helldistance}
\end{equation}
For each discord of response, trace, Hellinger, and Bures, the normalization factor in Eq.~(\ref{DiscOfResp}) is, respectively:
${\cal{N}}_{Tr}^{-1} = 1/4$, ${\cal{N}}_{Hell}^{-1} = {\cal{N}}_{Bu}^{-1} = 1/2$.

If the two states $\rho_1$ and $\rho_2$ are bipartite Gaussian states related by local traceless symplectic transformations or bipartite qubit-qudit states related by a local single-qubit unitary operation, that is $\rho_1 = \rho_{AB}$ and
$\rho_2 = \widetilde{\rho }_{AB} = W_{A} \rho _{AB} W_{A}^{\dagger }$, then, by exploiting Eq.~(\ref{QCBLU}),
it is straightforward to show that the quantum Chernoff bound is a simple, monotonically non-increasing simple
function of the Hellinger distance:
\begin{equation}
QCB = \frac{1}{4}\left( 2 - d_{Hell}^{2}(\rho_{AB},\widetilde{\rho }_{AB}) \right) \; .
\end{equation}
It is then immediate to show that the maximum of $QCB$ over the set of local unitary operations $\{W_{A}\}$ with completely
non-degenerate harmonic spectrum is a simple linear function of the Hellinger discord of response:
\begin{equation}
QCB^{max} = \frac{1}{2}\big( 1 - {\cal D}_{R}^{Hell}(\rho _{AB}) \big) \; .
\end{equation}
The discord of response quantifies the response of a quantum state to least-disturbing local unitary perturbations
and satisfies all the basic axioms that must be obeyed by a \emph{bona fide} measure of quantum correlations~\cite{Roga}: it vanishes if
and only if $\rho _{AB}$ is a classical-quantum state; it is invariant under local unitary operations; by fixing a well-behaved metrics
such as trace, Bures, or Hellinger, it is contractive under CPTP maps on subsystem $B$, i.e. the subsystem that is not perturbed by the
local unitary operation $W_A$; and reduces to an entanglement monotone for pure states, for one of which it also assumes the maximum possible value ($1$).

By comparing Eqs.~(\ref{PerrMax}) and (\ref{DiscOfResp}) with $x=Tr$, it is immediate to relate the maximum probability of error $P_{err}^{(\max )}$
to the trace discord of response ${\mathcal{D}}_{R}^{Tr}$:
\begin{equation}
P_{err}^{(\max )}=\frac{1}{2}-\frac{1}{2}\sqrt{{\mathcal{D}}_{R}^{Tr}(\rho_{AB})} \; \, .
\label{perrdiscord}
\end{equation}
From Eq.~(\ref{perrdiscord}) it follows that half of the square root of the
trace discord of response yields the difference between the absolute maximum of
the probability of error (i.e. $1/2$) and the maximum probability of error at
fixed transmitter state $\rho_{AB}$.

A vanishing trace discord of response implies that there exists at least one memory that
cannot be read by classical-quantum transmitters. A maximum trace discord of response
(${\mathcal{D}}_{R}^{Tr}=1$) implies that, irrespective of the coding, the maximally
entangled transmitter will read any memory without errors: indeed, any local
unitary operation with harmonic spectrum transforms a maximally entangled
state into another maximally entangled state orthogonal to it, and therefore yields perfect
distinguishability at the output.


\section{Quantum reading with squeezed thermal states} \label{Gausstrans}

In the following, in order to compare the efficiency of classical (non-discordant) and quantum noisy sources of light in reading protocols, we will consider
two-mode Gaussian states of the electromagnetic field. The states with vanishing first moments of the quadratures are fully described by their covariance matrix $\sigma$~\cite{Weedbrook2012,Salerno,Ferraro}:
\begin{equation}
\sigma = \frac{1}{2}\begin{bmatrix}
a & 0 & c_1 & 0 \\
0 & a & 0 & c_2 \\
c_1 & 0 & b & 0 \\
0 & c_2 & 0 & b
\end{bmatrix} \; ,
 \label{sqthcorm}
\end{equation}
The range of values of $a, b, c_1$ and $c_2$ for which the corresponding states are physical (i.e. correspond to positive density matrices)
is determined by the Heisenberg uncertainty relation stated in symplectic form:
\begin{equation}
\sigma+\frac{i}{2}\omega\oplus\omega\geq 0 \; ,
\label{uncert}
\end{equation}
where $\omega=\begin{bmatrix}
0 & 1 \\
-1 & 0
\end{bmatrix}$ is the symplectic form. In all the paper, if we refer to the symmetric states we mean $a=b$.
In the following we will focus on two rather general classes of (undisplaced) Gaussian states, the squeezed thermal states (STS)
and the thermal squeezed states (TSS). The former are defined by two-mode squeezing
$S(r)=\exp {\{ra_{1}^{\dagger }a_{2}^{\dagger }-r^{\ast }a_{1}a_{2}\}}$ applied on, possibly non-symmetric, two-mode thermal
states. Notice that the denomination STS is sometimes used in the literature to denote any Gaussian states characterized by the
covariance matrix Eq.~(\ref{sqthcorm}) with $c_1=-c_2$.

In the present work we adopt the convention that STSs describe a physically rather frequent
situation in which the thermal noise acts possibly non-symmetrically on the two modes, that is $N_{th_1}\neq N_{th_2}$, and thus the total number of thermal photons is $N_{th_1}+N_{th_2}$. Here $r$ is the two-mode squeezing parameter and $a_{i}$ are the
annihilation operators in each of the two modes ($i=1,2$). The diagonal and off-diagonal covariance
matrix elements for these states, respectively $a=a_{sq-th}$, $b=b_{sq-th}$ and $c_1=-c_2=c_{sq-th}$, read:
where
\begin{eqnarray}
a_{sq-th}\! &\!=\!&\!\cosh (2r)\!+\!2N_{th_{1}}\cosh ^{2}(r)\!+\!2N_{th_{2}}\sinh ^{2}(r)\nonumber  \, , \\
b_{sq-th}\! &\!=\!&\!\cosh (2r)\!+\!2N_{th_{2}}\cosh ^{2}(r)\!+\!2N_{th_{1}}\sinh ^{2}(r) \, ,\nonumber \\
c_{sq-th}\! &\!=\!&\!(1+N_{th_{1}}+N_{th_{2}})\sinh (2r)\label{ccvsqth} \, .
\end{eqnarray}
where $N_{s}=\sinh^{2}{(r)}$ is the number of squeezed photons.

Thermal squeezed states (TSSs) describe the reverse physical situation: an initially two-mode squeezed vacuum is allowed to evolve
at later times in a noisy channel and eventually thermalizes with an external environment characterized by a total number of thermal
photons $N_{th_1}+N_{th_2}$. The covariance matrix elements of TSSs, respectively $a=a_{th-sq}$, $b=b_{th-sq}$ and $c_1=-c_2=c_{th-sq}$, are:
\begin{eqnarray}
a_{th-sq} &=&2N_{s}+1+2N_{th_1}\, , \nonumber \\
b_{th-sq} &=&2N_{s}+1+2N_{th_2}\, , \nonumber \\
c_{th-sq} &=&2\sqrt{N_{s}(N_{s}+1)} \; . \label{cthsq}
\end{eqnarray}
The same covariance matrix, Eq.~(\ref{sqthcorm}), also describes classical uncorrelated
tensor product states, which we assume to be Gaussian. Thermal states are obtained letting
$c=c_{cl}=0$, $a=a_{cl}=1+2N_{th_1}$ and $b=b_{cl}=1+2N_{th_2}$. These Gaussian states are classical in the
sense that they can be written as convex combinations of coherent states
and, moreover, they are the only Gaussian states with vanishing discord~\cite{Adesso2010,Adesso2011}.
Notice that in the standard quantum optics terminology the wording classical states is used to denote any state with positive Glauber-Sudarshan $P$-representation. In the following, without loss of generality, we will identify party $A$ with mode $a_{1}$ and party $B$ with mode $a_{2}$.

\subsection{Probability of error: upper and lower bounds, and Gaussian discords of response}
\label{GausstransA}

For unitary-coding protocols with Gaussian transmitters, Gaussian local (single-mode) unitary operations acting on an infinite-dimensional Hilbert space are implemented by local (single-mode) symplectic transformations acting on the covariance matrix $\sigma$ of two-mode Gaussian input states $\rho_{AB}^{(\sigma)}$. In the following we will consider only traceless transformations. The traceless condition must be imposed in order to exclude trivial coding by two identical channels, for which the maximum probability of error is always $1/2$. Moreover, imposing tracelessness allows to investigate and determine the correspondence between reading efficiency and quantum correlations, as will be shown in the following.
Denoting by $F_A$ the local traceless symplectic transformations acting on mode $A$, the two local unitary operations implementing the encodings of the binary memory cells are the identity $\idty_{A}\oplus\idty_{B}$ and $F_A\oplus\idty_B$.

In order to assess the performance of quantum and classical Gaussian resources in the unitary-coding quantum reading protocol we need to evaluate the upper and lower bounds, Eqs.~(\ref{QCB}) and (\ref{errorineq}), on the maximum probability of error $P_{err}^{(\max )}$, Eqs.~(\ref{PerrMax}) and (\ref{perrdiscord}), for Gaussian two-mode transmitters $\rho_{AB}^{(\sigma)}$. To this end, we introduce first the {\em Gaussian} discord of response~\cite{Buono2014}, i.e the discord of response obtained by minimizing over local unitaries restricted only to the subset of local symplectic, traceless, transformations $F_A$:
\begin{equation}
{\cal{GD}}_{R}^{x}(\rho_{AB}^{(\sigma)}) \equiv \min_{\{F_A\}} {\cal{N}}_x^{-1} d_{x}^2\left(\rho_{AB}^{(\sigma)},\widetilde{\rho }_{AB}^{(\sigma)}
\right) \, ,
\label{intdr}
\end{equation}
where the index $d_x$ stands for trace, Hellinger, or Bures distance with the same normalization factors ${\cal{N}}_x^{-1}$ as before, and $F_A^T$ is the transpose of the symplectic matrix $F_A$ and $\widetilde{\rho }_{AB}^{(\sigma)} \equiv \rho_{AB}^{(F_A \sigma F_A^T)}$.
The Gaussian discord of response provides an upper bound to the true discord of response of Gaussian states and vanishes on and only on Gaussian classical states (subset of separable states that are in product form). The main properties of the Gaussian discord of response are reported in Appendix~\ref{appdiscord}.

In complete analogy with Eq.~(\ref{perrdiscord}) the maximum probability of error in discriminating two Gaussian transmitters related by a local symplectic transformation can be expressed as a simple function of the trace Gaussian discord of response:
\begin{equation}
P_{err}^{(\max )} = \frac{1}{2} - \frac{1}{2}\sqrt{{\mathcal{GD}}_{R}^{Tr}\left(\rho_{AB}^{(\sigma)}\right)} \; \, .
\label{Gaussperrdiscord}
\end{equation}
Specializing the bounds given by Eq.~(\ref{errorineq}) to the maximum probability of error in distinguishing Gaussian states, one has:
\begin{equation}
LBP_{err}^{(\max )} \leq P_{err}^{(\max )} \leq QCB^{(\max )} \, ,
\label{Gaussbounds}
\end{equation}
where the lower bound $LBP$ is a simple monotonically non-increasing function of the Bures Gaussian discord of response:
\begin{eqnarray}
LBP_{err}^{(\max )} = \frac{1}{2}\left( 1-\sqrt{1 - \big( 1 - {\mathcal{GD}}_{R}^{Bu} \big)^{2}} \right)
\, , \label{LBPmax}
\end{eqnarray}
and the upper bound $QCB$ is a simple linear, monotonically non-increasing function of the Hellinger Gaussian discord of response:
\begin{eqnarray}
QCB^{(\max )} = \frac{1}{2}\left( 1 - {\mathcal{GD}}_{R}^{Hell} \right) \, .
\label{QCBmax}
\end{eqnarray}
Therefore, for increasing Gaussian discords of response the bounds on the probability of error decrease correspondingly. The explicit expressions of the quantum Chernoff bound $QCB$, the Hellinger Gaussian discord of response, the Uhlmann fidelity, and the Bures Gaussian
discord of response are derived in Appendices~\ref{UhlmannGaussian} and \ref{ChernoffGaussian}.

\subsection{Maximum probability of error:  $\pi/2$ phase shift}
\label{phaseshiftreading}

The probability of error in distinguishing $\rho_{AB}^{(\sigma)}$ from $\widetilde{\rho }_{AB}^{(\sigma)} \equiv \rho_{AB}^{(F_A \sigma F_A^T)}$ is given by Eq.~(\ref{proberr}) with the local symplectic transformations $F_A$ replacing $W_{A}$. Among the local unitary operations $F_A$ which can implement the unitary-coding reading protocol, an important subset includes the single-mode phase shifts $P_{\phi }$ acting on mode $a_{1}$, parameterized by the
angle parameter $\phi$: $P_{\phi }=\exp {(-i\phi a_{1}^{\dagger }a_{1})}$.

Under a local phase shift the local mode $a_{1}$ is transformed as follows: $\widetilde{a}_1 = P_{\phi }a_{1}P_{\phi }^{\dagger } = \exp {(-i\phi )}a_{1}$, while the two-mode covariance matrix $\sigma $ transforms according to
$(F_{\phi }\oplus \mathbbm{1})\sigma (F_{\phi } \oplus \mathbbm{1})^{T}$, where the symplectic matrix $F_{\phi }$ reads
\begin{equation}
F_{\phi }=%
\begin{bmatrix}
\cos {\phi } & \sin {\phi } \\
-\sin {\phi } & \cos {\phi }
\end{bmatrix} \, .
\end{equation}
For the maximum probability of error, Eq.~(\ref{Gaussperrdiscord}), the upper bound is achieved, from
Eqs.~(\ref{Gaussbounds}) and (\ref{QCBmax}), in terms of a simple linear function of the Hellinger Gaussian
discord of response. The latter, in turn, is obtained by minimizing the Hellinger distance over the entire set of
local unitary operations implemented on the covariance matrix by local symplectic, traceless, transformations.
For squeezed thermal and thermal squeezed states one finds that this minimum is realized by the $\pi/2$ phase shift $F_{\pi /2}$,
that is the only possible traceless phase shift. Therefore the extremal unitary-coding protocol in the ensemble of local
traceless symplectic operations is realized by a particular PSK coding, the phase shift $\pi/2$, which is the only traceless
PSK coding. The details of the proof are reported in Appendix~\ref{chbm}.

On the other hand the quantity $LBP_{err}^{(\max )}$, Eq.~(\ref{LBPmax}), evaluated at $\pi/2$,
may not be optimal but certainly still provides a lower bound on the maximum probability of error:
\begin{equation}
LBP_{err}(F_{\pi/2})\leq LBP_{err}^{max}\leq P_{err}^{max} \; .
\end{equation}

Since for a $\pi/2$ phase shift the corresponding transformation is implemented by the traceless symplectic matrix $F_{\pi/2}=\begin{bmatrix}0&1\\-1&0\end{bmatrix}$, the expectation values of the canonical quadrature operators $x$ and $p$
transform as follows: $\<x\> \rightarrow -\<p\>$ and $\<p\> \rightarrow \<x\>$. Therefore, undisplaced thermal
Gaussian states ($\<x\> = \<p\> = 0$) are left invariant, and the worst-case PSK coding $(\mathbbm{1},F_{\pi /2})$ is
completely invisible to classical transmitters (thermal states) since the $\pi /2$ shift
does not change their covariance matrix. The probability of error $P_{err}$ for every such
classical transmitter always achieves the absolute maximum $1/2$. Viceversa, the very same coding can always be
read by any quantum Gaussian transmitter with nonvanishing Gaussian discord of response. As a consequence, quantum transmitters
always outperform undisplaced classical transmitters in device-independent, worst-case scenario quantum reading.
The situation changes when we consider displaced thermal states, as displacement unavoidably increases distinguishability.
Indeed the coherent and thermal coherent states are very efficient in detecting phase shift transformations. Nonetheless, in
the next section we will show that thermal coherent transmitters are outperformed by noisy quantum ones provided that the
distribution of the thermal noise among the modes in the quantum resource is strongly {\em non-symmetric}.

\section{Comparing classical and quantum resources: noise-enhanced quantum transmitters}
\label{comparisonclass}

We have seen that without displacement classical transmitters (thermal states) are completely
blind to reading. Introducing displacement enhances the distinguishability
of output states and turns classical states (thermal coherent states) in useful transmitters.
It is straightforward to show that distinguishability and the reading efficiency increase by implementing a single-mode
displacement rather than a two-mode one with equal single-mode amplitudes.

Let us then consider a scenario in which one compares
discordant quantum transmitters with displaced classical ones. We will show that in this
case, that is comparing noisy quantum resources with distinguishability-enhanced noisy classical ones,
discordant transmitters can outperform classical ones, and that the quantum advantage increases
with increasing (thermal) noise.

Stated precisely, given the same coding $(\idty,F_{\pi /2})$ acting locally on the first mode, we want to identify the regimes in which the
probability of error associated to a quantum transmitter is smaller than the probability of error associated to a thermal coherent one.
From Eqs.~(\ref{QCB}) and (\ref{errorineq}) this is equivalent to identifying the regimes in which the upper bound $QCB$ on the
probability of error using squeezed thermal
transmitters, denoted by $QCB^{sq-th}$, is smaller than the lower bound $LBP_{err}$
using thermal coherent states, that will be denoted
by $LBP_{err}^{coh-th}$. Obviously, only a constrained comparison at given fixed
physical quantities is meaningful. We will thus compare squeezed
thermal states and displaced thermal states at
fixed purity and fixed total number of photons. We will observe that the quantum advantage is achieved provided
the covariance matrix is not symmetric with respect to exchange of the modes.

With these notations, the requirement for a \emph{bona fide} quantum advantage reads as follows:
\begin{equation}
QCB^{sq-th} \leq LBP_{err}^{coh-th} \; .
\label{uplow}
\end{equation}

Both the coherent thermal and squeezed thermal states are two extremal classes of the general family of states which can be described as squeezed displaced thermal states (SDTS), defined as:
\begin{equation}
\rho_{SDTS} = S(r)D(\alpha)\rho_{th}(N_{th_1},N_{th_2})D(\alpha)^{\dagger} S(r)^{\dagger} \; ,
\end{equation}
where $S(r)=\exp(r a_1^{\dagger}a_2^{\dagger}-r a_1a_2)$ is the two-mode squeezing operator and we assume that squeezing parameter $r$ is real. Here $D(\alpha)=\exp(\alpha a_1^{\dagger}-\bar{\alpha} a_1)$ is the single-mode displacement operator and $\rho_{th}(N_{th_1},N_{th_2})\equiv\rho_{th_1}\otimes \rho_{th_2}$ is the non-symmetric two-mode thermal state, where $\rho_{th_i}\equiv \frac{1}{1+N_{th_i}}\sum_{m=0}^\infty (\frac{N_{th_i}}{1+N_{th_i}})^m\ket{m_i}\bra{m_i}$.
The purity $\mu=1/(16\det\sigma)^{1/2}$ of the SDTS is a function of the covariance matrix $\sigma$ and
depends only on the number of thermal photons:
\begin{equation}
\mu=\frac{1}{(1+2N_{th_1})(1+2N_{th_2})} \; .
\end{equation}
The total number of photons, $N_T=<a_1^{\dagger}a_1+a_2^{\dagger} a_2>$, in the SDTS reads:
\begin{equation}
N_T = \left( N_{th_1} + N_{th_2}\right)\left( 1 + 2N_s \right) + 2N_s \left( 1 + |\alpha|^2 \right) + |\alpha|^2 \; ,
\end{equation}
where $N_s=\sinh(r)^2$ is the number of squeezed photons.

Putting $r=0$, SDTSs reduce to thermal coherent states $\rho(0,\alpha,N_{th_1},N_{th_2})$ with total number of photons $N_T=N_{th_1}+N_{th_2}+|\alpha|^2$. Decreasing the displacement amplitude $\alpha$ the distinguishability of coherent thermal transmitters is reduced. We want to investigate whether this loss of distinguishability can be compensated by the quantum contribution due to increase of $r$ keeping $N_T$ and the purity fixed. In the limiting situation when $\alpha=0$ the corresponding quantum state $\rho(r,0,N_{th_1},N_{th_2})$ is a squeezed thermal state (STS). In the following we will show that for STSs $\rho(r,0,N_{th_1},N_{th_2})$ and thermal coherent states $\rho(0,\alpha,N_{th_1},N_{th_2})$ with equal total number of photons $N_T$, Ineq.~(\ref{uplow}) is satisfied for some ranges of $N_{th_1}$ and $N_{th_2}$. The condition of equal total number of photons $N_T$
implies $|\alpha|^2 = 2\sinh(r)^2(1+N_{th_1}+N_{th_2})$.

In order to evaluate the Uhlmann fidelity ${\cal{F}}$ and the quantum Chernoff bound $QCB$ in Eq.~(\ref{uplow})
we need to know how the phase shift $F_{\pi/2}$ transforms the transmitters
that we wish to compare: the squeezed thermal states and the
thermal coherent states. The dependence of ${\cal{F}}$ and
$QCB$ on the displacement vector and on the covariance matrix of general
Gaussian states is reported in Appendices~\ref{UhlmannGaussian} and
\ref{ChernoffGaussian}. The Uhlmann fidelity providing the lower bound on
$P_{err}$ for thermal coherent states depends only on the displacement
vector, since the covariance matrix of thermal coherent states is unaffected by the
action of the symplectic transformation 
$\left(F_{\pi/2}\oplus\mathbbm{1}_B\right)\sigma\left( F_{\pi/2}\oplus
\mathbbm{1}_B\right)^T$, where $F_{\pi/2}\oplus\mathbbm{1}_B=
\begin{bmatrix}
0 & -1 \\
1 & 0
\end{bmatrix}
\oplus\mathbbm{1}_B$.
Without loss of generality, the displacement vector of a thermal coherent state can be written as
$\<u\>_{coh-th} = [\sqrt{2}\abs{\alpha},0,0,0]^T$. Under a $\pi/2$ phase shift the difference $\delta$ between the final and
the initial displacement vectors reads as follows:
\begin{equation}
\delta=F_{\pi/2}%
\begin{bmatrix}
\abs{\sqrt{2}\alpha} \\
0 \\
0 \\
0%
\end{bmatrix}%
-%
\begin{bmatrix}
\abs{\sqrt{2}\alpha} \\
0 \\
0 \\
0%
\end{bmatrix}%
=%
\begin{bmatrix}
\abs{\sqrt{2}\alpha} \\
-\abs{\sqrt{2}\alpha} \\
0 \\
0%
\end{bmatrix}
\; .
\end{equation}
The Uhlmann fidelity of a thermal coherent state is then
\begin{equation}
\c F^{coh-th} = \exp{\left(-\frac{2\abs{\alpha}^2}{\c A}\right)} \; ,
\end{equation}
where $\c A=(1+2N_{th_1})$. The $QCB$ of non-symmetric, undisplaced squeezed thermal
state depends only on the covariance matrix, Eq.~(\ref{sqthcorm}), with
entries Eqs.~(\ref{ccvsqth}), and its explicit
expression is reported in Appendix~\ref{ChernoffGaussian}.


In Fig.~\ref{boundsenergy}, upper panel, we report the exact values of $P_{err}^{sq-vac}$ for a squeezed vacuum with squeezing $r$, $P_{err}^{coh}$ for a coherent state $\ket{\alpha}$, and $P_{err}^{sq-coh}$ for a squeezed displaced vacuum with squeezing $r'$ and displacement $\beta$, in the absence of noise, $N_{th_1}=N_{th_2} = 0$, and at fixed total photon number $N_T = \abs{\alpha}^2 = 2\sinh^2{(r)}=\abs{\beta}^2(1+2\sinh^2{(r')})+2\sinh^2{(r')}$.
The coherent states outperform the quantum resources given by the undisplaced squeezed vacuum. The coherent transmitters are then compared with squeezed displaced vacua of the same energy. Even if the latter include a classical contribution due to displacement and a quantum contribution due to squeezing they are still outperformed by the classical coherent states. The quantum efficiency converges to the classical one in the high-energy limit. For completeness, in  Fig.~\ref{boundsenergy} we also report the quantum Chernoff bound $QCB^{sq-vac}$ for the squeezed-vacuum transmitters.
%
%

\begin{figure}[tbp]
\includegraphics[width=7.0cm]{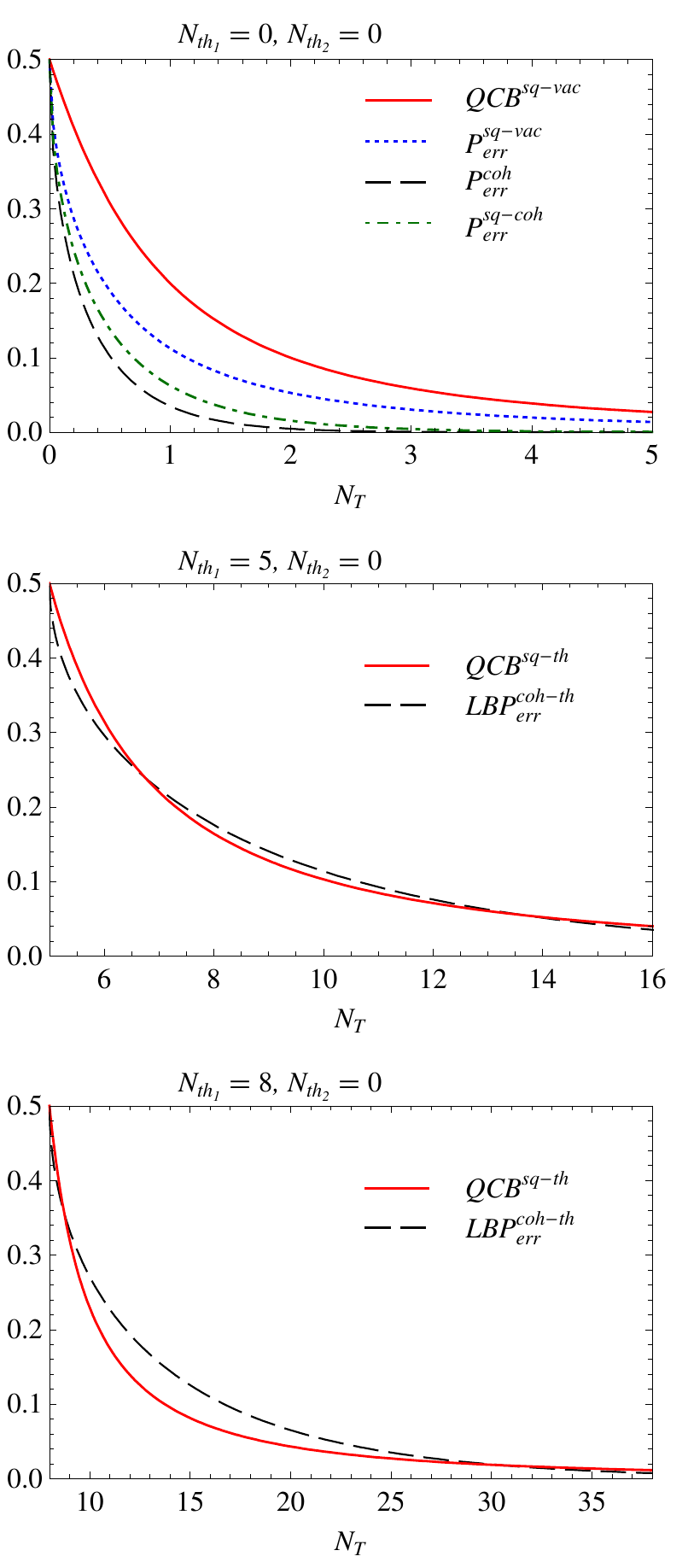}
\caption{Upper panel: behavior, as a function of the total photon number $N_T $, of the probability of error $P_{err}$ in the absence of thermal noise ($N_{th_1}=N_{th_2} = 0$). Blue dotted line: probability of error $P_{err}^{sq-vac}$ of
squeezed vacuum states. Black dashed line: $P_{err}^{coh}$ of
coherent states. Green dot-dashed line: $P_{err}^{sq-coh}$ of squeezed displaced vacuum states with displacement $\alpha=\frac{1}{2}N_{T}$. Red solid line: quantum Chernoff bound $QCB^{sq-vac}$ of squeezed vacuum states yielding the upper bound on $P_{err}$ with quantum transmitters.
No quantum gain is not observed in this regime.
Central panel: behavior of quantum and classical bounds on $P_{err}$ as
functions of $N_T$ at fixed asymmetric thermal noise: $N_{th_1} = 5$, $N_{th_2}=0$. Red
solid line: quantum upper bound $QCB^{sq-th}$ on $P_{err}$ with undisplaced
squeezed thermal states. Black dashed line: classical lower bound $LBP_{err}^{coh-th}$ on $P_{err}$
with thermal coherent states.
Lower panel: same as central panel, but with stronger thermal noise:
$N_{th_1} = 8$, $N_{th_2}=0$. With increasing $N_T$ the quantum upper bound goes below the
corresponding classical figures of merit and quantum transmitters certainly outperform classical ones.}
\label{boundsenergy}
\end{figure}

In the presence of symmetric thermal noise, $N_{th_1} = N_{th_2}$, there is no improvement in the quantum efficiency relative to the classical one. Introducing non-symmetric thermal noise, e.g. $N_{th_1}> N_{th_2}$, the Gaussian discords of response, that are intrinsically asymmetric quantities with respects to the subsystems in a given bipartition, increase dramatically, and so does the corresponding quantum reading efficiency. As a consequence, for sufficiently strong non-symmetric thermal noise the quantum resources outperform the classical ones. In the presence of non-symmetric noise exact expressions for $P_{err}$ are no
longer available. Therefore, in the central and lower panels of Fig.~\ref{boundsenergy} we report the exact lower and upper bounds on $P_{err}$ based
on the Uhlmann fidelity ${\cal{F}}$ and on the quantum Chernoff bound $QCB$. We observe that at intermediate
values of the total number of photons $N_T$ the quantum upper bound $QCB^{sq-th}$ on $P_{err}$ is strictly lower than the classical lower
bound $LBP_{err}^{coh-th}$, assuring that the quantum resources outperform the classical ones.
The classical transmitters (thermal coherent states) recover the quantum efficiency for large values of the total photon number.

Moreover, comparing the central and the lower panels in Fig.~\ref{boundsenergy}, we observe that as the number of thermal photons $N_{th_1}$ is
increased, the range of values of the total photon number $N_T$ for which
one has a quantum advantage increases.

In Fig.~\ref{rconstans} we provide a plot of the contour lines for the
differences $QCB^{sq-th} - LBP_{err}^{coh-th}$ for different asymmetries: $N_{th_2}=0$ (upper panel) and $N_{th_2}=0.5$ (lower panel) as functions of the total
photon number $N_T$ and of the purity (or, equivalently of the number of
thermal photons $N_{th_1}$). When these differences become negative, Ineq.~(\ref{uplow}) is satisfied and the quantum resources certainly outperform the
classical ones.

From the upper panel of Fig.~\ref{rconstans}, for $N_{th_2}=0$ comparing noisy quantum
transmitters with noisy coherent ones, one observes that $QCB^{sq-th} -
LBP_{err}^{coh-th} < 0$ in a large region of parameters. Fixing the
squeezing, so that the change in the total photon number $N_T$ is due only
to the change in the number of thermal photons $N_{th_1}$, corresponds to a
straight line in the plane (in the figure, drawn at $r=0.8$). Remarkably,
for these iso-squeezed states the quantum advantage increases with
increasing number of thermal photons. This is an instance of noise-enhanced
quantum efficiency that will be discussed further in Sec.~\ref{comparisonquant}.

In the lower panel of Fig.~\ref{rconstans} we decrease the asymmetry $(N_{th_2}=0.5)$.
We observe that the quantum gain is also achieved but in the range of much higher $N_{T}$.
Again, fixing the squeezing, e.g. at $r=0.8$, we notice that the quantum advantage increases with
thermal noise.

We remark that these results are obtained in a 
scenario in which we compare the minimum quantum efficiency (upper bound on the
error probability using quantum transmitters) with the maximum classical
efficiency (lower bound on the error probability using coherent thermal
transmitters). Therefore the actual quantum advantage will be even larger.
\begin{figure}[!th]
\includegraphics[width=7.0cm]{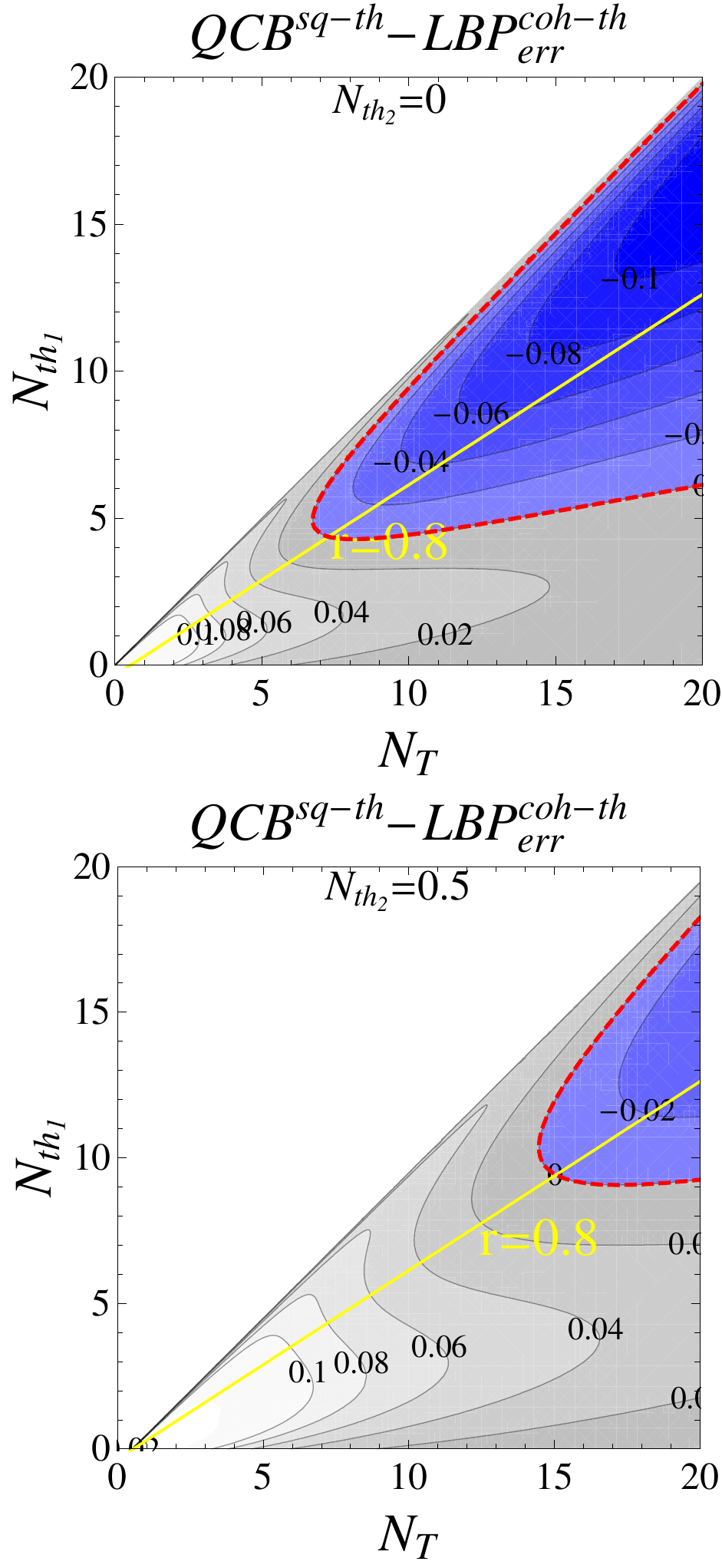}
\caption{Contour plot providing the contour lines for the differences $
QCB^{sq-th} - LBP_{err}^{coh-th}$ for $N_{th_2}=0$ (upper panel) and for $N_{th_2}=0.5$
(lower panel) as functions of the total photon number $N_T $ and of the number of thermal photons $N_{th_1}$. The region in which
these quantities assume negative values corresponds to quantum transmitters
certainly outperforming coherent thermal ones. The red dashed curve identifies its
boundary. The straight solid yellow lines in both panels corresponds to a fixed degree
of squeezing $r=0.8$. Moving along this lines in the direction of increasing
number of thermal photons $N_{th_1}$ one observes that as noise grows there is
a growing advantage in using quantum transmitters over classical ones. This
behavior provides an instance of noise-enhanced quantum performance. The effect is reduced when the asymmetry $N_{th_1}-N_{th_2}$ decreases.}
\label{rconstans}
\end{figure}

The quantum advantage disappears in the symmetric situation $N_{th_1}=N_{th_2}$.
Indeed, the inequality $N_{th_1} > N_{th_2}$ expresses the condition that the mode which passes through the coding channels is more noisy. This condition is unfavorable for thermal coherent states and favorable for STSs. Namely, in STSs with fixed finite squeezing, increasing the number of thermal photons in the first mode certainly increases the discord of response and, as a consequence, increases also the reading efficiency for this type of transmitters. This phenomenon is further analyzed in the following sections on the comparison of different quantum transmitters. These two concatenated effects cause the advantage of quantum states over the classical transmitters in the protocol of quantum reading with noisy transmitters. The asymmetry between the local thermal noise terms is the crucial element for realizing the enhancement of the reading efficiency. As we will see in the next section, the behavior of STSs with increasing number of thermal photons in the symmetric situation $N_{th_1}=N_{th_2}$, although not sufficient to realize a quantum advantage over classical resources, favors STSs among other noisy quantum transmitters.


\section{Comparing noisy quantum resources}
\label{comparisonquant}

In the previous section we compared classical and quantum transmitters and
for worst-case scenario we identified the regimes in which noisy but discordant quantum resources
outperform classical thermal coherent ones. We also observed that the quantum advantage
can increase, at fixed squeezing, with increasing thermal noise. We will now
compare the behavior of squeezed-thermal and thermal-squeezed states
in order to investigate how thermal noise affects the quantum efficiency
of different classes of quantum transmitters. We shall compare symmetric squeezed thermal
and thermal squeezed transmitters either at fixed number of thermal photons or at fixed squeezing.
We will then consider how non-symmetric noise further enhances the quantum efficiency by suppressing
the upper bound on the probability of error. Finally, we will investigate how the quantum efficiency
of different quantum transmitters improves when multiple reading operations are implemented at
fixed thermal noise.

\subsection{Comparing symmetric squeezed thermal and thermal squeezed transmitters: fixed noise}
\label{quantumthermal}

Let us start by comparing quantum reading with symmetric squeezed thermal and symmetric thermal squeezed
transmitters at fixed number of thermal photons and its performance as a function of the total number of photons. This comparison
is motivated by the fact that the interplay between quantum and thermal fluctuations
is very different for these two classes of quantum states. Squeezed thermal states (STSs) are
obtained by applying on thermal states, namely states that have already
thermalized (e.g. at the output of a noisy channel) a purely quantum operation, two-mode squeezing, that can be interpreted
as a re-quantization of the thermal vacuum. Viceversa, thermal squeezed states (TSSs) are
realized by letting pure squeezed vacua evolve and eventually thermalize in a noisy channel.

Both squeezed thermal and thermal squeezed states are two extremal classes of the very general family of squeezed thermal squeezed displaced states (STSDSs) which are defined as follows:
\begin{eqnarray}
&& \rho(r,N_{th_1},N_{th_2},r',\alpha) = \\
&&S(r)\Phi_{N_{th_1},N_{th_2}}\Big[S(r')D(\alpha)\rho_{vac}D(\alpha)^{\dagger}S(r')^{\dagger}\Big] S(r)^{\dagger} \, .
\nonumber
\end{eqnarray}
Here $S(r)$ and $S(r')$ are two-mode squeezing operators with different squeezing parameters $r$ and $r'$,
$D(\alpha)$ is a single-mode displacement operator, $\Phi_{N_{th_1},N_{th_2}}$ is a noisy channel introducing $N_{th_1}$ and $N_{th_2}$
thermal photons respectively in the first and in the second mode. The channel acts on a given Gaussian state adding the number of thermal photons to the diagonal entries of its covariance matrix. Finally, $\rho_{vac}=\ket{00}\bra{00}$ denotes the two-mode vacuum state.
We study this family of states at constant fixed values of the parameters $N_{th_1}$ and $N_{th_2}$.
The total number of photons in a STSDS is:
\begin{eqnarray}
&&N_T=<a_1^{\dagger}a_1+a_2^{\dagger} a_2>=\\
&&(N_{th_1}\! +\! N_{th_2}) \cosh(2 r')\! +\! (1 \! + \! |\alpha|^2) \cosh(2 (r \! + \! r')) \! - \!1 \, .
\nonumber
\end{eqnarray}
Consider first the situation without displacement, $\alpha=0$, and with symmetric thermal noise $N_{th_1}=N_{th_2}=N_{th}$.
Putting $r=0$, STSDSs reduces to thermal squeezed states TSTs $\rho(0,N_{th},N_{th},r',0)$.
Decreasing $r'$ and correspondingly increasing $r$ while keeping $N_T$ and $N_{th}$ fixed, in the limit $r' \rightarrow 0$ one recovers
the squeezed thermal states STSs $\rho(r,N_{th},N_{th},0,0)$.
Let us compare these two extremal classes of quantum Gaussian transmitters that coincide for $N_{th}=0$ and differ for $N_{th} \neq 0$ or, in non symmetric situations, when either $N_{th_1} \neq 0$ and/or $N_{th_2} \neq 0$.

In Fig.~\ref{quantumquantum}, upper left panel, we observe that for nonvanishing but small number of thermal photons $N_{th}$ the upper bound on the probability of error $P_{err}$ using STSs still remains above the lower bound on $P_{err}$ using TSSs. By further increasing thermal noise, as shown in the upper right panel of Fig.~\ref{quantumquantum}, all bounds with STSs are below all bounds with TSSs and the STSs certainly outperform the TSSs. Due to the different effects of the noise in STSs and TSSs we observe the clear advantage of using STSs over TSSs in a quantum reading protocol.
The lower panels of Fig.~\ref{quantumquantum} show the comparison done for displaced thermal squeezed states $\rho(0,N_{th},N_{th},r',\alpha)$ and displaced squeezed thermal states $\rho(r,N_{th},N_{th},0,\alpha)$. At fixed total number of photons we observe that single-mode displacement always increases the reading efficiency for both classes of states, while decreasing the squeezing reduces and eventually wipes out the quantum advantage of STSs over TSSs. These two classes of states coincide in the limiting case $r=r'=0$ in which they both recover the classical thermal coherent states. As we have seen in the previous section, the advantage of STSs over classical states is recovered by considering non-symmetric thermal noise.

\begin{figure}[tbp]
\includegraphics[width=9.0cm]{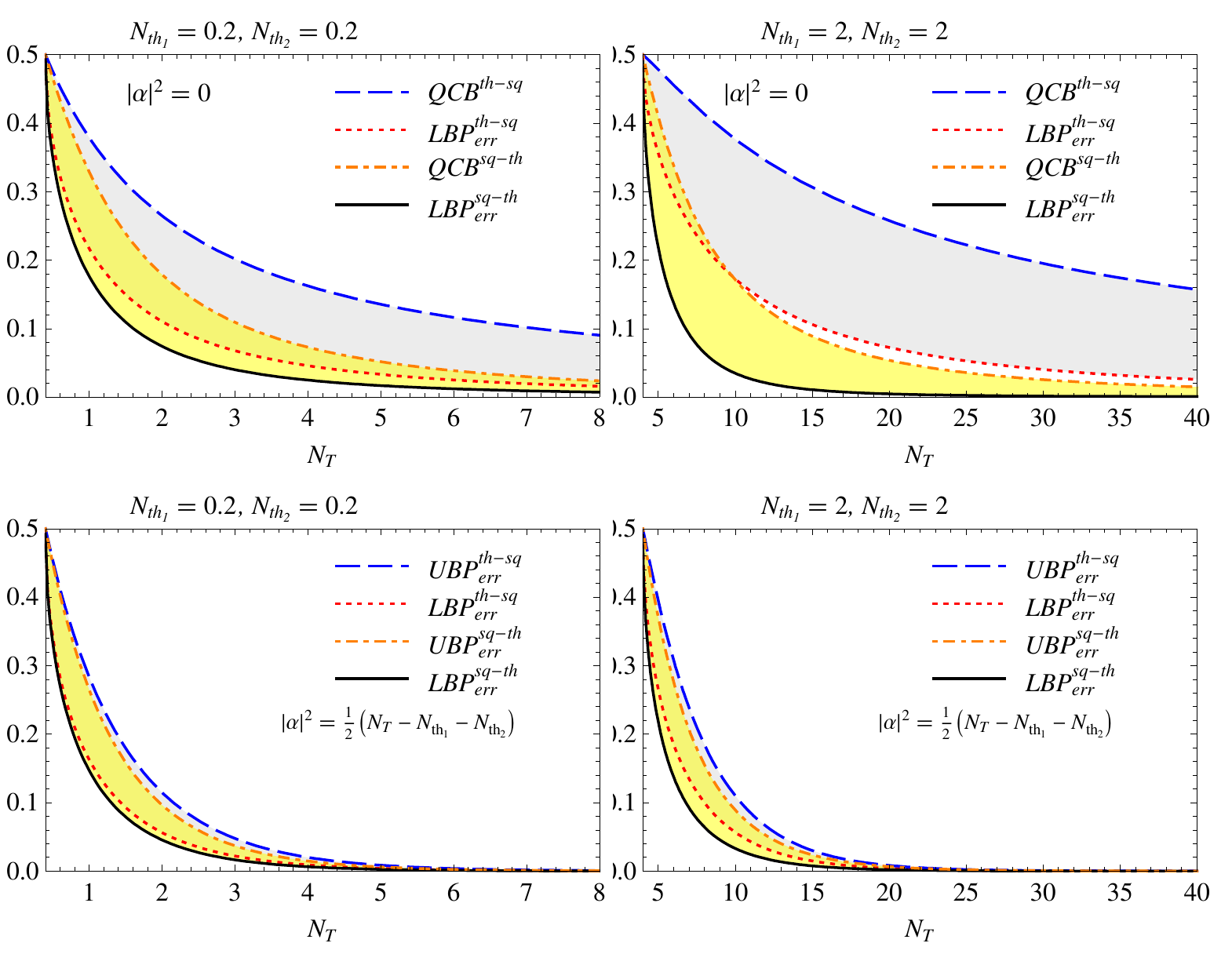}
\caption{Upper left panel: behavior, as a function of the total number of photons $N_T $, of the upper and lower bounds on the probability of error $P_{err}$ using either undisplaced squeezed thermal states (STSs) or undisplaced thermal squeezed states (TSSs) with fixed, symmetric, thermal noise: $N_{th_1} = N_{th_2}=0.2$. Upper right panel: the same but with $N_{th_1}=N_{th_2}=2$. In this case one observes that beyond a threshold value of $N_{T}$ the STSs certainly outperform the TSSs. Lower panels similar to the upper panels but with nonvanishing displacement $|\alpha|^2=\frac{1}{2}(N_T-N_{th_1}-N_{th_2})$. The upper bounds on the probability of error are given here by the quantum Bhattacharyya coefficient $UBP_{err}=1/2\tr\sqrt{\rho_1}\sqrt{\rho_2}$, which for states that include displacement does not necessary coincide with the quantum Chernoff bound $QCB$. The displacement increases the efficiency of the reading, however it does not guarantee with certainty the noise-enhanced performance of STSs with respect to TSSs.}
\label{quantumquantum}
\end{figure}

\subsection{Comparing symmetric squeezed thermal and thermal squeezed transmitters: fixed squeezing}

When comparing the behavior of the Uhlmann fidelity, quantum Chernoff bound, and
the Gaussian discords of response under variations of the classical noise at a fixed level
of quantum fluctuations (squeezing), we expect a radically diverging behaviors of the STSs with
respect to the TSSs. On intuitive grounds, since fidelity, Chernoff bound, and discord are measures of
distinguishability between an input state and the corresponding output after
a local disturbance, if we compare STSs and TSSs we notice from the structure of
their covariance matrices, see Eqs.~(\ref{cthsq}), that as
$N_{th}$ increases the correlation part of the STSs increases, while it remains constant in TSSs.

Indeed, the quantum Chernoff bound $QCB$ and of the Uhlmann fidelity $\c F$ for any two Gaussian
states of the form Eq.~(\ref{sqthcorm}) with $a=b$ and $c=c_1=-c_2$, related by a $\pi/2$ phase shift, take the form:
\begin{equation}
QCB = \frac{a^2-c^2}{2a^2-c^2} \; ,
\label{derivchernoff}
\end{equation}

\begin{equation}
\c F = \frac{4}{\left[1+c^2-a^2+\sqrt{(c^2-a^2)^2+1+2a^2}\right]^{2}} \; .
\label{derivfidel}
\end{equation}
In Fig.~\ref{e18} we report the behavior of the upper bound on the probability of error $QCB$, Eq.~(\ref{derivchernoff}), and of
the lower bound $LBP_{err}$ (which is a monotonic increasing function of the Uhlmann fidelity $\c F$, Eq.~(\ref{derivfidel}))
for the squeezed thermal and thermal squeezed states as functions of the number of thermal photons at fixed squeezing.
We observe that for TSSs $\c F$ and $QCB$ both increase with increasing thermal noise, converging asymptotically to
the absolute maximum ($1/2$) of the probability of error. Therefore, the quantum efficiency of TSSs is
suppressed by increasing the thermal noise.

On the contrary, for STSs $QCB$ remains constant
and $LBP_{err}$ decreases. This behavior guarantees that the probability of error, at fixed squeezing, is bound to vary in a
restricted interval below $0.1$. In the given example, the squeezing amplitude $r$ has been fixed at a relatively low value
$r \simeq 0.9$. Increasing the level of squeezing will further reduce the maximum value achievable by the probability of error.
In conclusion, the quantum advantage associated to squeezed thermal states is paramount at fixed, even moderate, squeezing,
and increases monotonically with increasing thermal noise.
\begin{figure}[tbp]
\includegraphics[width=7.9cm]{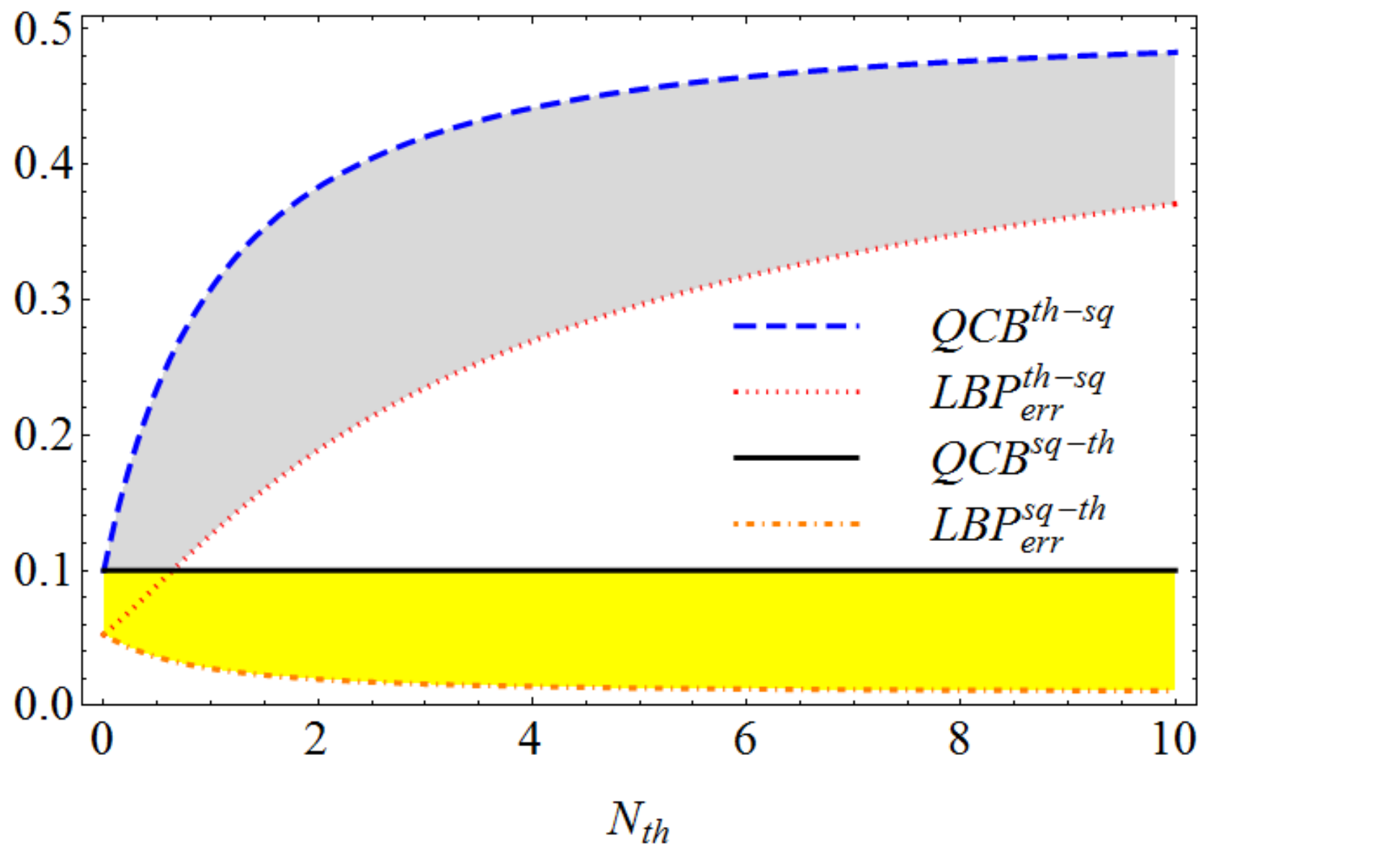}
\caption{Behavior of the quantum Chernoff bound $QCB$ and of the lower bound on the probability of error $LBP_{err}$
as functions of the number of thermal photons $N_{th}$, at fixed number of squeezed photons $N_s = 1$, for thermal squeezed
and squeezed thermal states. Blue dashed line: $QCB$ for thermal squeezed states. Dotted red line: $LBP_{err}$ for thermal
squeezed states. Solid black line: $QCB$ for squeezed thermal states. Orange dot-dashed line: $LBP_{err}$ for squeezed thermal
states. The colored areas between the upper and lower bounds denote the admissible intervals of variation for the probability
of error $P_{err}$. Increasing thermal noise suppresses the efficiency of thermal squeezed transmitters and increases the
efficiency of squeezed thermal ones.}
\label{e18}
\end{figure}
A more detailed understanding of these opposite behaviors can be gained by looking at the variation of the measures of
distinguishability with respect to the variations of the thermal noise and of the parameters of the covariance matrix.

Consider a generic measure of distinguishability denoted by $f(\rho_1,\rho_2)$ where $f$, among others, includes the Uhlmann
fidelity $\c F$ and the quantum Chernoff bound $QCB$. Consider then the {\em total} derivative of $f$ with respect to $N_{th}$,
keeping $r$ constant:
\begin{equation}
\frac{df}{dN_{th}}=\frac{\partial f}{\partial a}\bigg|_{c}\frac{\partial a}{\partial N_{th}}+\frac{\partial f}{\partial c}\bigg|_{a}\frac{\partial c}{\partial N_{th}} \; .
\label{derivative}
\end{equation}
Specializing to either $\c F$ or $QCB$ we obtain the explicit expressions of their derivatives, as reported in Appendix \ref{Derivatives}. From these explicit
expressions it follows that it is always
$$
\frac{\partial QCB}{\partial a}\big|_{c} \geq 0 \; , \; \frac{\partial QCB}{\partial c}\big|_{a} \leq 0 \; , \; \frac{\partial \c F}{\partial a}\big|_{c} \geq 0 \; , \; \frac{\partial \c F}{\partial c}\big|_{a} \leq 0 \; ,
$$
irrespective of the type of quantum transmitter considered.

Hence, if $f$ represents either the Uhlmann fidelity or the quantum Chernoff bound, the derivative $\frac{\partial f}{\partial a}\big|_{c}\geq 0$.
This behavior agrees with the intuition that the operation of increasing the diagonal entries of the covariance matrix and keeping the off-diagonal entries constant acts like a thermal channel which makes the initial state and the final state after the phase shift less distinguishable.
The behavior $\frac{\partial f}{\partial c}\big|_{a}\leq 0$ for both $\c F$ and $QCB$ is also intuitively clear, since changes in $\sigma$ under $F_{\pi/2}$ are the greater the larger the off-diagonal entries when keeping the diagonal $a$ constant.

Let us now discuss the state-dependent derivatives: for STSs and TSSs the partial derivatives  $\frac{\partial a}{\partial N_{th}}$
and $\frac{\partial c}{\partial N_{th}}$ are non-negative, therefore they cannot oppose the behavior of the state-independent part. For STSs they are given by $2\cosh{(2r)}$ and $2\sinh{(2r)}$ respectively, while for TSSs $\frac{\partial a}{\partial N_{th}}=2$ and $\frac{\partial c}{\partial N_{th}}=0$.
The behavior of $\cal F$ or $QCB$ with increasing $N_{th}$ depends then on the ratio of the positive and negative parts on the left hand side of Eq.~(\ref{derivative}).

For TSSs there is only a positive contribution in Eq.~(\ref{derivative}) and both $\c F$ and $QCB$ increase with increasing number of thermal photons. As a consequence, both the lower and the upper bounds on the probability of error must increase, as observed in Fig.~\ref{e18}.
On the other hand, for STSs the negative contribution always prevails when considering the Uhlmann fidelity, while the positive and negative contributions always cancel exactly when considering the quantum Chernoff bound, leading to a constant upper bound on the probability of error, as observed in Fig.~\ref{e18}.

The constant behavior of $QCB$ as a function of thermal noise for STSs can be also seen directly from Eq.~(\ref{derivchernoff}). This equation can be rewritten straightforwardly only in terms of $a/c$. Indeed, this ratio for STSs does not depend on $N_{th}$.

In this section we have considered reading protocols with binary coding given by the identity and the phase shift $\pi/2$, and transmitters implemented by symmetric STSs. This is actually a worst-case scenario in two respects. On the one hand, the phase shift $\pi/2$ provides the worst possible coding among all traceless local symplectic operations (maximum probability of error, device-independent reading). On the other hand, the {\em symmetric} STSs provide the worst possible transmitters among general STSs.

Indeed, in the next subsection we will show that non-symmetric STSs provide much larger quantum efficiencies and even effectively suppress the probability of error.

\subsection{Non-symmetric squeezed thermal states: noise-suppressed bounds on the probability of error}
\label{noiseenhancedQCB}

One might speculate that the increment of the Bures discord of response for increasing thermal noise and the corresponding decrement of the lower bound on the probability of error are due to the particular relation with the Bures metrics induced by the Uhlmann fidelity.
However, this is not the case. We will now show that if one considers non-symmetric two-mode STSs then also the Hellinger discord of response increases under increasing local thermal noise and therefore the corresponding upper bound on the probability of error decreases as well.
This is a strong indication that the true probability of error decreases as well with increasing thermal noise and thus that the use of discordant, non-symmetric STSs yields an absolute advantage, even over the use of entangled pure states, namely two-mode squeezed vacua with the same amount of squeezing as in the corresponding STSs.

The covariance matrix of non-symmetric two-mode STSs is given in Eq.~(\ref{sqthcorm}) with the parameters given in Eqs.~(\ref{ccvsqth}).
The corresponding $QCB$ achieves its maximum for the $\pi/2$ phase shift, as proven in Appendix~\ref{chbm}, and the exact expression of $QCB$ for non-symmetric STSs related by a $\pi/2$ phase shift is:
\begin{equation}
QCB=\frac{ab-c^2}{2ab-c^2}\;.
\label{derivchernoffas}
\end{equation}

Let us consider the variation $\frac{d QCB}{dN_{th_1}}$ of the quantum Chernoff bound, at constant squeezing $r$ and constant number of thermal photons
$N_{th_2}$ in the second mode, whose analytical expression is provided in Appendix~\ref{Derivatives}. From this expression it is clear that there is a range of values of $N_{th_1}$ and $N_{th_2}$, namely $N_{th_1} > N_{th_2}$, for which $\frac{d QCB}{dN_{th_1}} < 0$. Therefore, in this regime $QCB$ decreases with increasing $N_{th_1}$. On the other hand, $QCB$ increases with increasing $N_{th_1}$ if $N_{th_1}<N_{th_2}$. Henceforth, in the symmetric situation $N_{th_1}=N_{th_2}=N_{th}$ the quantum Chernoff bound is maximum and constant, independent of $N_{th}$, as discussed in the previous section.

\begin{figure}
\includegraphics[width=9cm]{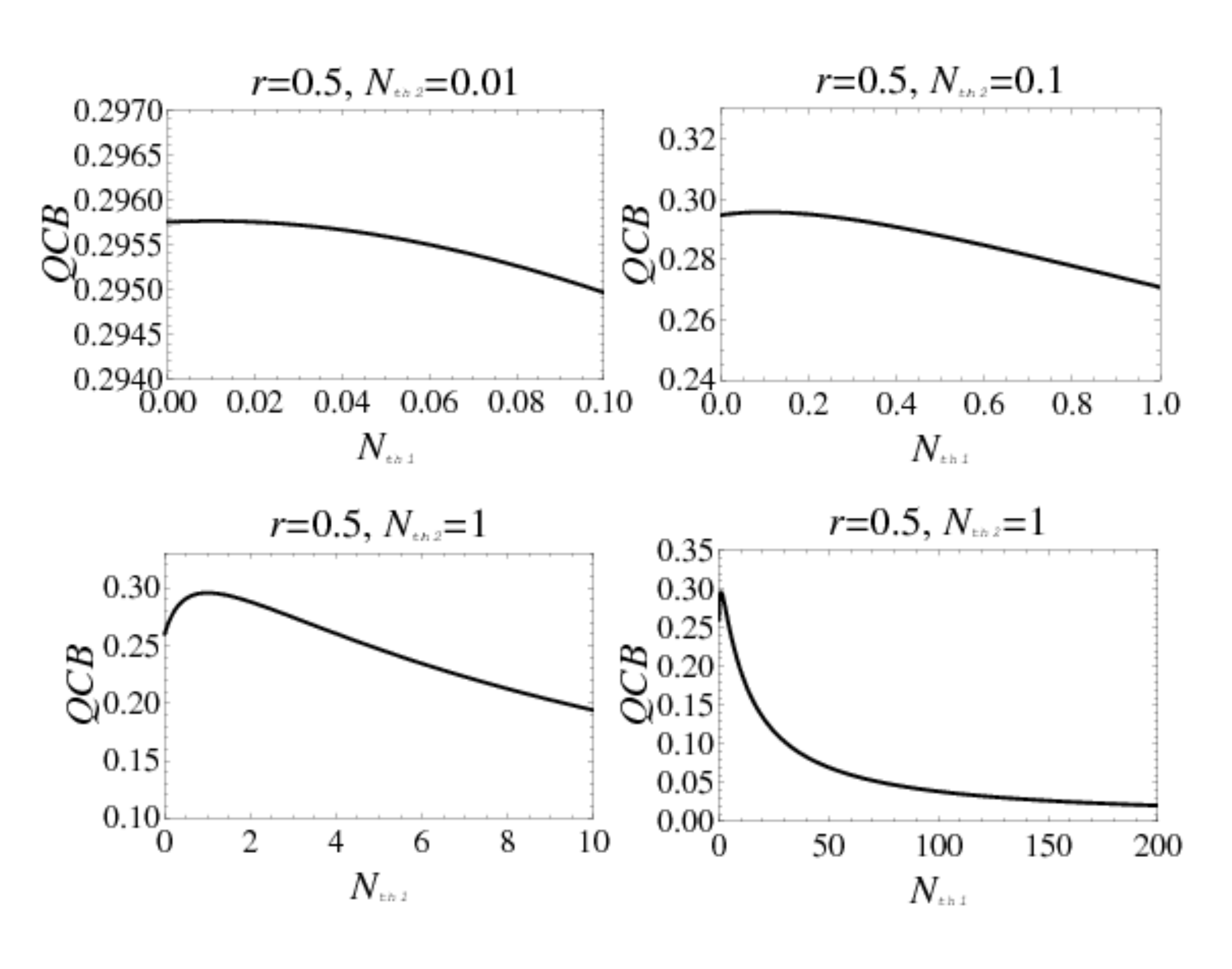}
\caption{Quantum Chernoff bound for non-symmetric STSs related by the phase shift $F_{\pi/2}$, as a function of $N_{th_1}$. In each panel the number of thermal photons $N_{th_2}$ in the second mode is fixed at a constant value. Upper left panel: $N_{th_2}=0.01$. Upper right panel: $N_{th_2}=0.1$. Lower left panel: $N_{th_2}=1$. Lower right panel: $N_{th_2}=1$ and extended range of values of $N_{th_1}$, in order to show the asymptotic vanishing of $QCB$ with increasing local thermal noise. For all panels the two-mode squeezing is fixed at $r=0.5$. The maximum of $QCB$ is achieved for symmetric STSs and provides the upper bound on the maximum probability of error $P_{err}^{\max}$ of the worst-case scenario.}
\label{qcbnth1}
\end{figure}

In Fig.~\ref{qcbnth1} we report the behavior of $QCB$ as a function of $N_{th_1}$ for different fixed values of $N_{th_2}$ and fixed squeezing $r$. In this physical situation the quantum  Chernoff bound decreases with increasing {\em local} thermal noise and vanishes asymptotically for $N_{th_1} \rightarrow \infty$. Therefore the probability of error in a Gaussian quantum reading protocol can be made arbitrarily small by using non-symmetric STSs transmitters with very large {\em local} thermal noise.

This very remarkable result may look at first quite counter-intuitive. In fact, the crucial point is that this feature is obtained by the {\em global quantum} operation of two-mode squeezing applied to a two-mode thermal state with very strong asymmetry in the {\em local} thermal noises affecting the two field modes. It is therefore not entirely unexpected that the consequences can be dramatic. While entanglement certainly decreases, the operation of squeezing a larger amount of noise can increase quantum state distinguishability by {\em "orthogonalizing"} on a larger portion of Hilbert space with respect to the thermal states.

\subsection{Squeezed thermal and squeezed vacuum states}
\label{noisypure}

Collecting all the previous results we are finally in the position to compare the best resources of device-independent Gaussian quantum reading, namely the noisy and discordant non-symmetric STSs, to the best absolute resources of Gaussian quantum reading, namely pure entangled two-mode squeezed vacuum states (TMSVSs). In the limit of infinite squeezing the TMSVSs are maximally entangled pure Einstein-Podolsky-Rosen (EPR) states whose probability of error in a quantum reading protocol vanishes identically. In absolute terms, TMSVSs are certainly the best among classical and quantum resources in a reading protocol with continuous variables. Indeed, in Fig.~\ref{sqthsqvac} we report the behavior of the exact probability of error for TMSVSs and the lower bound on it for non-symmetric STSs as functions of the total number of photons at fixed thermal noise, that is for {\em arbitrarily increasing squeezing} as the total number of photons increases. One observes that the lower bound on the probability of error for non-symmetric STSs is always above the exact probability of error for TMSVSs, converging towards it only asymptotically.

\begin{figure}
\includegraphics[width=8cm]{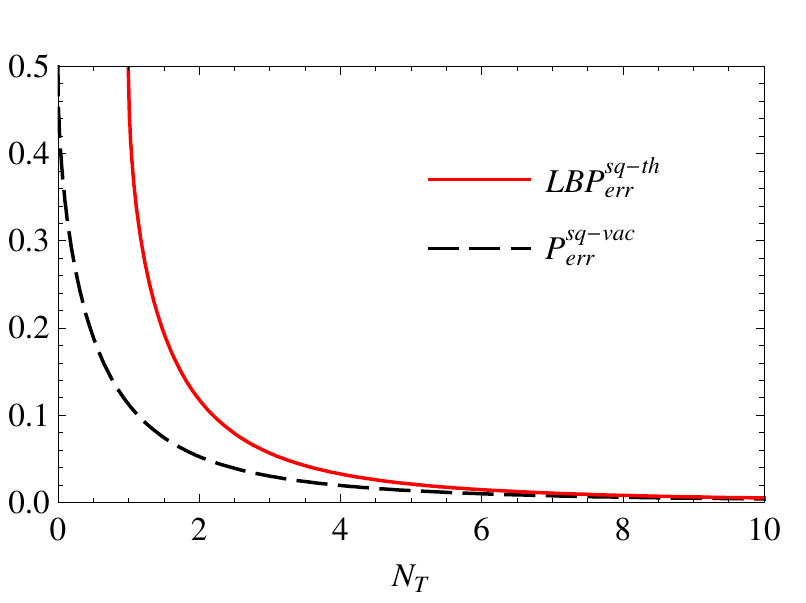}
\caption{Behavior as a function of the total number of photons $N_T$ of the probability of error $P_{err}^{sq-vac}$ for two-mode squeezed vacuum transmitters (TMSVs) and of the lower bound on the probability of error $LBP_{err}^{sq-th}$ for two-mode squeezed thermal transmitters (STSs) with $N_{th_1}=1$ and $N_{th_2}=0$. TMSVs have better reading efficiency than STSs. The squeezing in TMSVs is larger than the one in STSs at each fixed value of $N_T$. The two efficiencies converge asymptotically with increasing total number of photons.}
\label{sqthsqvac}
\end{figure}

On the other hand, it is also important to compare TMSVSs and non-symmetric STSs in terms of the concrete use of resources in realistically feasible experimental scenarios. In Fig.~\ref{stsvstmsvs} and Fig.~\ref{stsvstmsvs1} we report the behavior of the exact probability of error $P_{err}^{sq-vac}$ associated to TMSVS transmitters compared to the upper and lower bounds $QCB^{sq-th}$ and $LBP_{err}^{sq-th}$ for non-symmetric STS transmitters as functions of the total number of photons. In the case of TMSVSs the total number of photons obviously depends only on the squeezing and the behavior of the probability of error is the same as the one reported in Fig.~\ref{sqthsqvac}. However, at variance with Fig.~\ref{sqthsqvac}, in Fig.~\ref{stsvstmsvs} and \ref{stsvstmsvs1} we compare it with the lower and upper bounds for ST transmitters at a {\em fixed} finite value, low and comparably easy to produce experimentally, of the squeezing. In this case, the total number of photons in STSs varies only with the amount of thermal photons. Fig.~\ref{stsvstmsvs} shows the behavior of the exact probability of error for TMSVS transmitters and the bounds on the probability of error for non-symmetric STSs, as functions of the total number of photons $N_T$ and constant squeezing parameter fixed at $r=0.5$. For comparison, in Fig.~\ref{stsvstmsvs1} we report the same quantities but for a larger fixed two-mode squeezing $r=1$. Indeed, the higher the squeezing, the better the bounds on the probability of error using STS transmitters approximate the exact probability of error for TMSVS transmitters.

The crucial difference is that in real-world experimental setups it is comparatively much easier and less resource-demanding to implement a scheme relying on non-symmetric STSs with enhanced thermal noise and quantum discord than to produce pure (noise-free) TMSVSs with enhanced squeezing and entanglement.
Therefore, at {\em fixed} squeezing, we can compare the two classes of transmitters for {\em different} values of the total number of photons $N_T$ and ask for the threshold value of $N_{th_1}$ above which the discordant STSs certainly perform better than the entangled TMSVSs at the same fixed level of squeezing (the noise on the second mode being also fixed at a given reference value, say e.g. $N_{th_2}=0$). This threshold is thus determined by the condition $QCB^{sq-th} = P_{err}^{sq-vac}$. We give here two numerical examples for two different realistic values of the two-mode squeezing achievable in the laboratory with current technologies. For $r=0.5$, we have that $QCB^{sq-th} \leq P_{err}^{sq-vac}$ as soon as $N_{th_1} \geq 3.6$. For $r=1$, we have that $QCB^{sq-th} \leq P_{err}^{sq-vac}$ as soon as $N_{th_1} \geq 2.6$. Therefore, the higher the fixed level of squeezing, the lower is the level of thermal noise and quantum discord required to match the performance of pure entangled TMSVSs. Alternatively, increasing $N_{th_1}$ further above the threshold, we can also look for the complementary information on the minimum threshold values of $r$ (more easily realizable in the laboratory) above which STSs match or surpass the performance of TMSVSs at higher values of the squeezing (harder to achieve experimentally). In other words, we can introduce the concept of {\em effective} squeezing $r_{eff}$ associated to the value $N_{th}^{eff}(r,r_{eff})$ such that for $N_{th} > N_{th}^{eff}(r,r_{eff})$ STSs perform better than TMSVSs with given squeezing $r > r_{eff}$.

In conclusion, device-independent quantum reading is a remarkable protocol of quantum technology with noisy resources for which the best transmitters are discordant non-symmetric squeezed thermal states whose performance is optimized by realizing a fine trade-off between increased local thermal noise and fixed global two-mode squeezing, yielding noise-enhanced quantum correlations and state distinguishability.

\begin{figure}
\includegraphics[width=8cm]{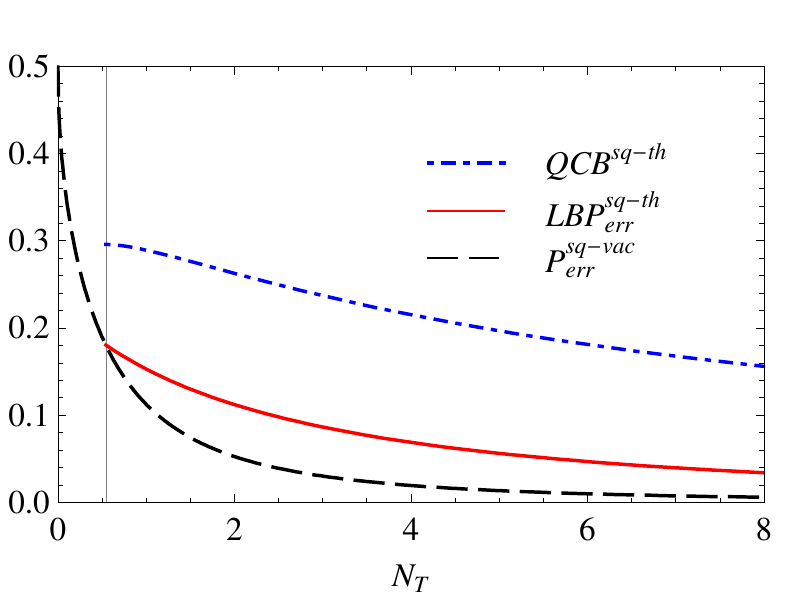}
\caption{Behavior as functions of the total number of photons $N_T$ of the probability of error $P_{err}^{sq-vac}$ for two-mode squeezed vacuum transmitters, and of the lower and upper bounds on the probability of error $LBP_{err}^{sq-th}$ and $QCB^{sq-th}$ for non-symmetric two-mode squeezed thermal transmitters. The latter two quantities are plotted for variable $N_{th_1}$ at {\em fixed} squeezing $r=0.5$, as well as fixed reference thermal noise in the second field mode $N_{th_2} = 0$.}
\label{stsvstmsvs}
\end{figure}

\begin{figure}
\includegraphics[width=8cm]{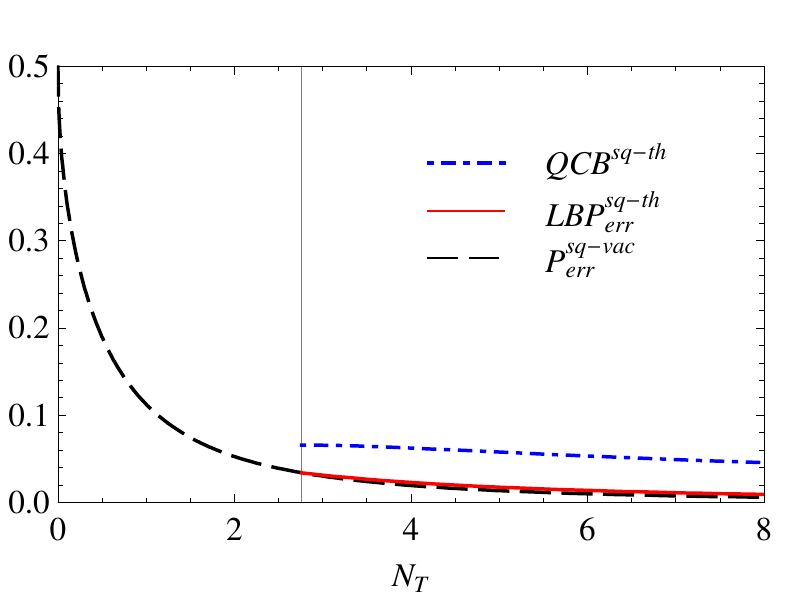}
\caption{Behavior as functions of the total number of photons $N_T$ of the probability of error $P_{err}^{sq-vac}$ for two-mode squeezed vacuum transmitters, and of the lower and upper bounds on the probability of error $LBP_{err}^{sq-th}$ and $QCB^{sq-th}$ for non-symmetric two-mode squeezed thermal transmitters. The latter two quantities are plotted for variable $N_{th_1}$ at {\em fixed} squeezing $r=1$, as well as fixed reference thermal noise in the second field mode $N_{th_2} = 0$.}
\label{stsvstmsvs1}
\end{figure}


\subsection{Many copies}\label{manycop}

Let us now analyze the case in which the total number of photons can vary by considering many copies of the transmitter,
that is repeating the reading protocol many times independently. Using $n$ copies of the system the Uhlmann fidelity
and the quantum Chernoff bound decrease as powers of $n$. Therefore, the probability of error can decrease both in the
case of squeezed thermal and thermal squeezed states.

The interesting question which arises here is how many copies we need in both cases to achieve a given level of probability of error.
The number of copies defines for instance the time needed for reading one bit of information in the given coding. Therefore this process
is interesting from the point of view of assessing the reading time and the strength of the sources of squeezed light that one needs.
\begin{figure}[!th]
\includegraphics[width=8cm]{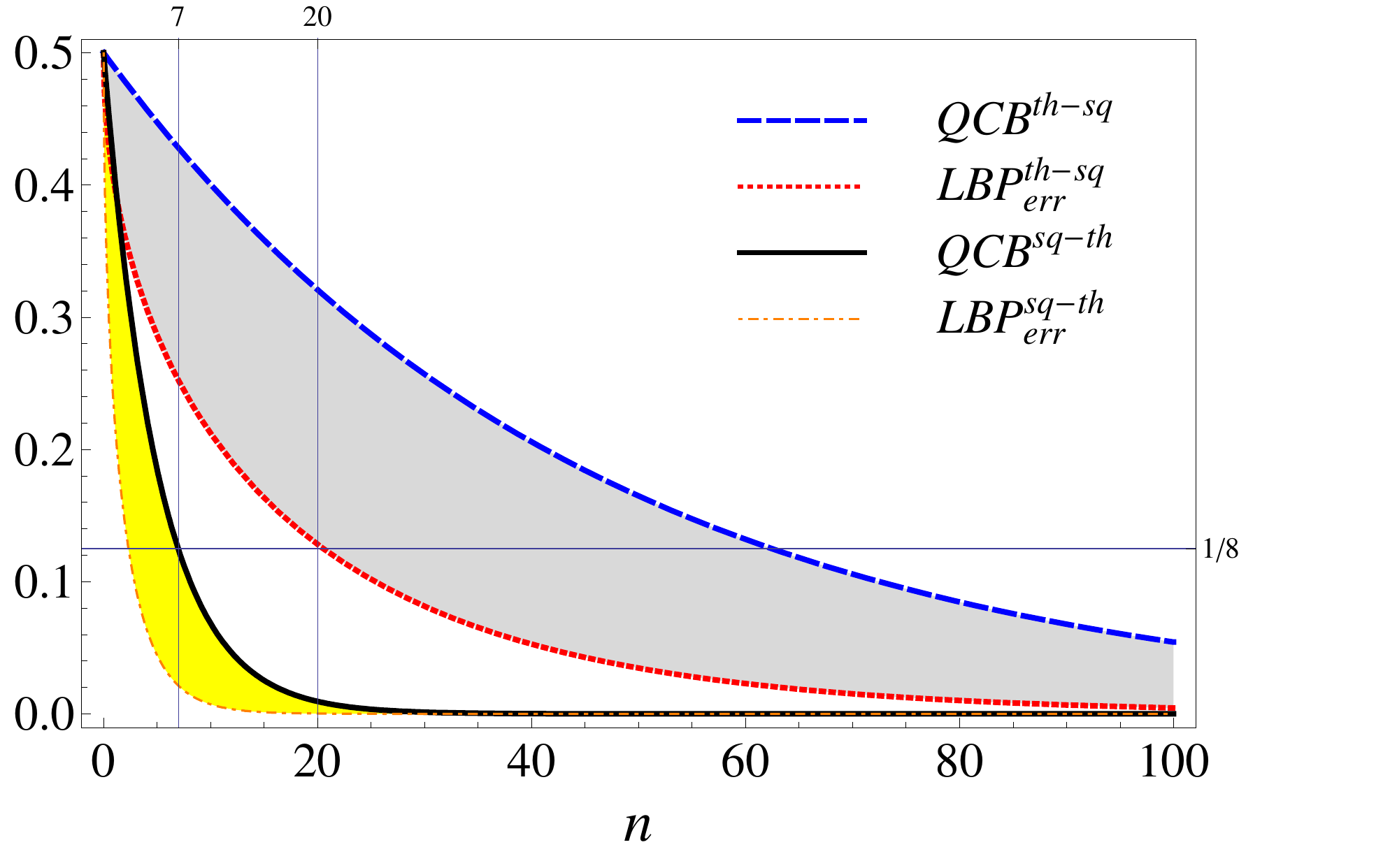}
\caption{Upper and lower bounds on the probability of error, using squeezed thermal (STSs) and thermal squeezed (TSSs) transmitters,
as a function of the number of copies of each transmitter, at fixed number of squeezed and thermal photons in each single
copy: $N_s=0.1$ and $N_{th}=1$. Blue dashed line: $QCB$ for thermal squeezed states. Dotted red line: $LBP_{err}$ for thermal
squeezed states. Solid black line: $QCB$ for squeezed thermal states. Orange dot-dashed line: $LBP_{err}$ for squeezed thermal
states. In order to achieve $P_{err}=1/8$ it is enough to take at most $n=7$ copies of STSs, while
the needed number of copies of TSSs is at least $n=20$.}
\label{en}
\end{figure}
Let us for instance assume that we require a value of the probability of error $1/8$, having for each copy of the squeezed thermal transmitter
the thermal noise fixed at $N_{th}=1$ and the weak squeezing fixed at $N_s=0.1$, see Fig.~\ref{en}. Looking at the upper bound (worst-case scenario),
the number of copies which are needed, in order to achieve the desired level of probability of error, is at most $n=7$.
Taking instead the thermal squeezed transmitter with the same squeezing and thermal noise in each copy, we see from Fig.~\ref{en}
that one needs, considering the lower bound (best-case scenario), at least $n=20$ copies.

These behaviors illustrate very clearly the advantage of using noise-enhanced quantum correlations. Indeed, comparing Figs.~\ref{en} and \ref{e18}, we see that by keeping a fixed level of squeezing and increasing the thermal noise, the number of copies of squeezed thermal transmitters needed to achieve a given level of precision stays constant, while the number of copies of thermal squeezed transmitters must increase.





%

\section{Conclusions and outlook}

\label{summary}

We have investigated Gaussian quantum reading protocols realized by weak optical sources in the worst-case scenario for quantum transmitters with respect to classical (thermal coherent) ones. For protocols that involve local unitary operations in the process of reading by continuous-variable Gaussian optical fields, we have showed that the maximum probability of error in reading binary memory cells is directly related to the amount of quantum correlations in a given transmitter, as quantified by the trace Gaussian discord of response. This relation allows to quantify the reading efficiency in terms of quantum correlations, providing a natural operational interpretation to the Gaussian discord of response.

Indeed, the latter is a well-defined measure of quantum state distinguishability under the action of local unitary operations. Therefore, the more discordant is the transmitter, the smaller is the maximum probability of error when using quantum resources. This relation then allows to determine the physical regimes of state purity and signal strength for which one has a net advantage in using quantum resources over classical thermal coherent ones.

Since the trace distance is in general uncomputable for Gaussian states, we have introduced exact upper and lower bounds on the maximum probability of error. We have showed that these bounds are expressed in terms of other type of quantum discords. In particular, the lower bound is expressed in terms of the Bures Gaussian discord of response, while the upper bound, provided by the quantum Chernoff bound maximized over the set of possible local unitary operations, is expressed in terms of the Hellinger Gaussian discord of response for squeezed thermal states and thermal squeezed states.

Both bounds decrease with an increasing amount of quantum correlations, providing a precise quantitative estimate of the quantum advantage obtained by using discordant resources over the corresponding thermal coherent ones. Moreover, the Bures and Hellinger discords of response are of further independent interest, as they play a central role in other quantum protocols studied recently, ranging from the assessment and use of local quantum uncertainty in optimal phase estimation~\cite{Girolami2013}, the efficiency of black-box quantum metrology~\cite{Interferometry2014,Spehner2013,GaussianMetrology2014}, and the quantum advantage of discordant resources in the protocol of quantum illumination~\cite{Farace2014}.

After comparing quantum and classical resources, we have discussed two fundamental classes of Gaussian quantum transmitters: symmetric squeezed thermal states (STSs) and symmetric thermal squeezed states (TSSs). We have shown that the actual beneficial or detrimental effects of environmental noise depend on the type of quantum state being considered. Considering STSs as quantum transmitters, the upper and lower bounds on the probability of error decrease with increasing thermal noise and therefore the quantum reading efficiency increases. The opposite behavior is observed when considering TSSs: in this case both the upper and the lower bounds on the maximum probability of error increase and therefore the quantum reading efficiency decreases with increasing thermal noise.

Finally, we went a step further and investigated the use of non-symmetric STSs. For such transmitters, also the quantum Chernoff bound decreases when the local thermal noise increases in one mode and remains fixed in the second mode. Indeed, the quantum Chernoff bound vanishes asymptotically with very large local thermal noise and therefore the probability of error must also vanish. In other words, non-symmetric two-mode STSs with imbalanced thermal noise between the two modes achieve an asymptotically vanishing probability of error for very large values of the noise imbalance. For such asymptotic states the Hellinger and Bures Gaussian discords of response attain their maximum value, and the quantum Chernoff bound and Uhlmann fidelity vanish. As a consequence, all upper and lower bounds on the probability of error vanish, the probability of error itself vanishes, and perfect reading is approached asymptotically.

Since the quantum reading efficiency of non-symmetric two-mode squeezed thermal states is a non-decreasing function of thermal noise, there is no evident advantage in using pure-state squeezed transmitters, the two-mode squeezed vacuum states, or low-noise ones over non-symmetric two-mode STSs with large noise imbalance between the field modes, {\em as long as} the squeezing is kept {\em fixed} at a realistic, finite constant value achievable in concrete experiments with currently available technology. Hence, noisy STSs transmitters {\em can} provide a better quantum efficiency at {\em fixed} two-mode squeezing, provided thermal noise (number of thermal photons) is enhanced beyond the threshold value above which the upper bound on the probability of error (quantum Chernoff bound) for STSs goes below the exact probability of error for TMSVSs with the same, fixed level of squeezing.

This remarkable phenomenon of noise-assisted quantum correlations and quantum efficiency is eventually due to the fact that quantum state distinguishability is intimately related to the concept of geometric quantum correlations, as measured by the discords of response, and the observation that the former can increase under increasing thermal noise. In particular, maximum {\em local} noise enhancement leads to maximum {\em global} enhancement of quantum correlations. In forthcoming studies we will provide a general characterization and quantification of noise-suppressed vs. noise-enhanced quantum correlations for different classes of quantum states~\cite{taming2014}, and we will investigate the relations between different types of quantum correlations according to states, metrics, and operations~\cite{RSI2014}.


\acknowledgments
F.I. acknowledges valuable discussions with Gerardo Adesso. The authors acknowledge financial support from the Italian
Ministry of Scientific and Technological Research under the PRIN 2010/2011 Research Fund, and from the EU FP7 STREP Projects iQIT,
G.A. No. 270843, and EQuaM, G.A. No. 323714.


\appendix

\section{Gaussian discord of response}

\label{appdiscord}
Here we discuss the Gaussian discord of response, given by Eq.~(\ref{intdr}), and prove that it is a {\it bona fide} measure of quantum correlations. More general discussion can be found in \cite{Buono2014}. The minimal set of axioms with universal consensus includes the following: $i)$ invariance under local unitary transformations, $ii)$ contractivity under the action of completely positive and trace preserving (CPTP) maps acting on mode $B$,  $iii)$ vanishing of quantum correlations if and only if the state is classically-quantum correlated, i.e. with block-diagonal covariance matrix, $iv)$ reduction to an entanglement monotone for pure states. The first condition is guaranteed by unitary invariance of the chosen distance and the procedure of minimization. The second condition is satisfied due to the fact that we consider only contractive distances in order to define the discord of response.

The third condition is verified as follows. It is known that classical-quantum two-mode
Gaussian states are those and only those which can be represented by the
tensor product $\omega_A\otimes\omega_B$ of single-mode Gaussian states~\cite{Adesso2010, Adesso2011}.
Up to displacements, such states are characterized by the block
diagonal covariance matrices $\sigma_{AB}^{(cq)}=\begin{pmatrix}
\sigma_A & 0 \\
0 & \sigma_B
\end{pmatrix}
$. Let us consider the local traceless symplectic transformation $F_A$ which can be decomposed as $F_A=S_AF_{\pi/2}S_A^{-1}$, where $F_{\pi/2}=\begin{pmatrix}
0 & 1 \\
-1 & 0
\end{pmatrix}
$ and $S_A$ is a symplectic matrix which diagonalize $\sigma_A$, i.e. $\sigma_A=\nu S_A\idty S_A^T$. Here $\nu$ are two equal symplectic eigenvalues of  $\sigma_A$.
The transformation given by $F_{\pi/2}$ is
symmetry-preserving and therefore $F_A=S_AF_{\pi/2}S_A^{-1}$ leaves $\sigma_A$ invariant.
This shows that if the state is classically correlated
there exists at least one
local traceless transformation $F_A$ that leaves the state invariant.

We now prove the reverse statement that only in the case of classically
correlated states there exists such a symplectic traceless transformation
that leaves the state invariant. Assume that the covariance matrix left invariant by a
traceless transformation $F_{A}$ has the form $\sigma_{AB}=
\begin{pmatrix}
L_{11} & L_{12} \\
L_{21} & L_{22}
\end{pmatrix}
$. Local symplectic transformation can bring the covariance matrix in the so called
normal form in which $L_{12}=
\begin{pmatrix}
c & 0 \\
0 & -c
\end{pmatrix}
$. If the state is not changed by the local transformation we have $F_AL_{12}=L_{12}$. Since $L_{12}$ is reversible we
obtain that $F_A=\mathbbm{1}$ which contradicts the assumption on the
spectrum of $F_A$. This shows that condition $iii)$ is satisfied.
Condition $iv)$ is guaranteed by the fact that for pure states the Gaussian discord of response reduces to the Gaussian entanglement of response~\cite{Adesso2007} (the Gaussian counterpart of the entanglement of response~\cite{Monras2011}) which is a
{\it bona fide} measure of entanglement.

\section{Uhlmann fidelity}

\label{UhlmannGaussian}

The Uhlmann fidelity for two-mode Gaussian states can be computed as follows~\cite{Marian2012}. Let us define the matrix of the symplectic form
\begin{equation}
\Omega=
\begin{bmatrix}
0 & 1 & 0 & 0 \\
-1 & 0 & 0 & 0 \\
0 & 0 & 0 & 1 \\
0 & 0 & -1 & 0
\end{bmatrix}
.
\end{equation}
The displacement vector is the vector of the averages of the amplitude and phase field quadratures $x$ and $p$ i.e. $\<u\>_{\rho}=(\<x_1\>,\<p_1\>,\<x_2\>,\<p_2\>)^T$, where $T$ stands for transposition. Denote the difference of the displacement vectors of two Gaussian states  $\rho_1$ and $\rho_2$ by  $\delta=\<u\>_{\rho_1}-\<u\>_{\rho_2}$.
We need also the auxiliary formulas defined using the covariance matrices $\sigma_1$ and  $\sigma_2$  of the respective Gaussian states:
\begin{eqnarray}
\Delta &=&\det(\sigma_1+\sigma_2), \\
\Gamma &=& 2^4 \det[(\Omega\sigma_1)(\Omega\sigma_2)-\frac{1}{4}\mathbbm{1}], \\
\Lambda &=& 2^4 \det(\sigma_1+\frac{i}{2}\Omega)\det(\sigma_2+\frac{i}{2}\Omega).
\end{eqnarray}
The Uhlmann fidelity for two mode Gaussian states is then
\begin{eqnarray}
\c F(\rho_1,\rho_2)&\equiv& \exp{\left[-\frac{1}{2}\delta^T(\sigma_1+
\sigma_2)^{-1}\delta\right]}  \notag \\
&\times&\left[(\sqrt{\Gamma}+\sqrt{\Lambda})-\sqrt{(\sqrt{\Gamma}+
\sqrt{\Lambda})^2-\Delta}\right]^{-1}.
\label{gfidelity}
\end{eqnarray}

\section{Quantum Chernoff bound}

\label{ChernoffGaussian}

Any $n$-mode Gaussian state can be represented in its normal mode decomposition
parameterized by $\rho\rightarrow(\<u\>,S,\{\nu_k\})$ in which $\<u\>$ is the vector of the averages of the quadratures and
\begin{equation}
\rho=U_{\<u\>,S}\left[\bigotimes_{k=1}^n\rho(\nu_k)\right]U_{\<u\>,S}^{\dagger},
\end{equation}
where
\begin{equation}
\rho(\nu_k)=\frac{2}{2\nu_k+1}\sum_{j=0}^{\infty}\left(\frac{2\nu_k-1}{
2\nu_k+1}\right)^j\left|{j_k}\right\rangle \left\langle{j_k}\right |
\end{equation}
is a thermal state with mean photon number $\bar{n}_k=\nu_k-1/2$ and $\left|{j_k}\right\rangle$ are
the eigenstates of the operator of the number of photons in mode $k$. The set $\{\nu_1,....,\nu_n\}$ identifies the symplectic spectrum. In this way the
covariance matrix is decomposed as
\begin{equation}
\sigma=S\tilde{\Lambda} S^T,\quad {\rm where}\quad \tilde{\Lambda}=\bigoplus_{k=1}^n\nu_k\mathbbm{1}_k.
\end{equation}
For two arbitrary Gaussian states with normal mode decompositions
$\rho_1\rightarrow (\<u_1\>,S_1,\{\alpha_k\})$ and $\rho_2\rightarrow (\<u_2\>,S_2,\{\beta_k\})$, assuming that $\delta=\<u_1\>-\<u_2\>$, we have~\cite{Pirandola2008}
\begin{eqnarray}
Q_t&\equiv&\tr{\rho_1^t\rho_2^{(1-t)}}\nonumber\\
&=&\bar{Q}_t\exp{\{-\frac{1}{2}\delta^T[V_1(t)+V_2(1-t)]^{-1}\delta\}},  \label{appchs}
\end{eqnarray}
where
\begin{equation}
\bar{Q}_t=\frac{2^n\prod_{k=1}^nG_t(\alpha_k)G_{1-t}(\beta_k)}{\sqrt{\det[V_1(t)+V_2(1-t)]}}
\label{qubar}
\end{equation}
and
\begin{equation}
G_p(x)=\frac{2^p}{(x+1)^p-(x-1)^p}.
\end{equation}
Moreover
\begin{eqnarray}
V_1(t)&=&S_1\left[\bigoplus_{k=1}^n\Lambda_t(\alpha_k)\mathbbm{1}_k\right]S_1^T,\label{v1} \\
V_2(1-t)&=&S_2\left[\bigoplus_{k=1}^n\Lambda_{1-t}(\beta_k)\mathbbm{1}_k\right]S_2^T,\label{v2}
\end{eqnarray}
where
\begin{equation}
\Lambda_p(x)=\frac{(x+1)^p+(x-1)^p}{(x+1)^p-(x-1)^p}
\end{equation}
The quantum Chernoff bound for arbitrary states $\rho_1$ and $\rho_2$ is
\begin{equation}
QCB\equiv\frac{1}{2}\inf_{t\in (0,1)}\tr{\rho_1^t\rho_2^{(1-t)}}
\label{appchernoff}
\end{equation}
which for Gaussian states is expressed by means of Eq.~(\ref{appchs}), i.e. $QCB=\frac{1}{2}\inf_{t\in (0,1)}Q_t$.

\section{Extremization of the quantum Chernoff bound}
\label{chbm}

Let us discuss the extremizations of the quantum Chernoff bound, $QCB$, between two states related by a local unitary transformation.
Lemma 1 in Ref.~\cite{Farace2014} shows that in the finite-dimensional case if $\rho _{2}=\Theta \rho _{1}\Theta ^{\dagger }$, where $\Theta $ is a
Hermitian matrix, the infimum is achieved for $t=1/2$ in Eq.~(\ref{appchernoff}). The same proof can be applied as well to $\Theta $ one-qubit traceless unitary matrix, since it is Hermitian. For Gaussian states of infinite-dimensional, continuous-variable systems, we are able to formulate and prove the following theorem:

\begin{theorem}
For two-mode Gaussian states $\rho^{(\sigma)}$ with covariance matrix $\sigma$ of the form Eq.~(\ref{sqthcorm}) with $c_1=-c_2$, the $QCB$ for the pair $(\rho^{(\sigma)},\rho^{(S_A\sigma S_A^T)})$, where $S_A$ is any traceless local symplectic transformation, is achieved for $t=1/2$, namely:
\begin{equation}
\tr\left(\rho^{(\sigma)}\right)^t\left(\rho^{(S_A\sigma S_A^T)}\right)^{1-t}\geq \tr\sqrt{\rho^{(\sigma)}}\sqrt{\rho^{(S_A\sigma S_A^T)}}.
\label{qcbs}
\end{equation}
\end{theorem}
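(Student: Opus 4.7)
The plan is to prove the inequality by showing that $Q_t \equiv \tr\bigl(\rho^{(\sigma)}\bigr)^t\bigl(\rho^{(S_A\sigma S_A^T)}\bigr)^{1-t}$ satisfies the symmetry $Q_t = Q_{1-t}$ and then invoking the standard log-convexity of this function on $[0,1]$ to place its minimum at the symmetric point $t = 1/2$.

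First, I would apply the explicit Gaussian formulas from Appendix~\ref{ChernoffGaussian} to $Q_t$. The covariance matrix in Eq.~(\ref{sqthcorm}) describes an undisplaced state, so the exponential factor in Eq.~(\ref{appchs}) is trivial and $Q_t = \bar Q_t$. Because $S_A$ is symplectic, $\rho^{(\sigma)}$ and $\rho^{(S_A\sigma S_A^T)}$ share the same symplectic spectrum, $\alpha_k = \beta_k$, so the numerator of Eq.~(\ref{qubar}) reduces to $2^n\prod_k G_t(\alpha_k)G_{1-t}(\alpha_k)$, which is manifestly invariant under $t\leftrightarrow 1-t$. The content of the claim is thus encoded entirely in the symmetry of the denominator $\det[V_1(t)+V_2(1-t)]$.

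The key technical step is to exploit the tracelessness of $S_A$. A traceless element of $Sp(2,\mathbb{R})$ has eigenvalues $\pm i$, so $S_A^2 = -\mathbbm{1}$, hence $S_A^{-1} = -S_A$, while $\det S_A = 1$. Using the normal-mode decompositions $S_1$ for $\rho^{(\sigma)}$ and $S_2 = S_A S_1$ for $\rho^{(S_A\sigma S_A^T)}$, Eqs.~(\ref{v1})--(\ref{v2}) give $V_1(t) = M(t)$ and $V_2(1-t) = S_A M(1-t) S_A^T$ with $M(s)$ symmetric. Factoring $S_A$ out of the sum and applying $S_A^{-1} = -S_A$ allows one to rewrite
\begin{equation}
\det\bigl[V_1(t)+V_2(1-t)\bigr] = \det A(1-t), \qquad A(s) \equiv M(s) S_A^T - S_A M(1-s).
\end{equation}
A straightforward identity using only the symmetry of $M$ then yields $A(1-t) = -A(t)^T$, so that in the four-dimensional phase space $\det A(1-t) = \det A(t)$. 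The same manipulation applied after swapping $t \leftrightarrow 1-t$ produces $\det A(t)$, establishing $Q_t = Q_{1-t}$.

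To close the argument, I would appeal to the standard log-convexity of $t\mapsto \tr(\rho_1^t\rho_2^{1-t})$ on $[0,1]$, a consequence of Lieb's concavity theorem. Convexity combined with the symmetry just established places the minimum at $t = 1/2$, giving $\inf_{t\in(0,1)}Q_t = Q_{1/2}$ and hence the inequality Eq.~(\ref{qcbs}). I expect the main difficulty to lie in the determinant manipulation: tracelessness of $S_A$ must be used twice (to supply $S_A^{-1} = -S_A$ and to pair the two terms into the antisymmetric transpose relation), and the symmetry of $M(s)$ is essential for closing the argument. Notably, the proof does not require the covariance to be symmetric in the modes, only that the state be undisplaced and that $c_1 = -c_2$ so that $\sigma$ has the form Eq.~(\ref{sqthcorm}).
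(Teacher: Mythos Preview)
Your overall strategy---show $Q_t=Q_{1-t}$ and combine with convexity of $t\mapsto\tr\rho_1^t\rho_2^{1-t}$---is exactly the paper's. The numerator argument and the use of $S_A^2=-\mathbbm{1}$ for a traceless element of $Sp(2,\mathbb{R})$ are also correct. The gap is in the determinant step.

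The local operation on the two-mode phase space is $\tilde S_A\equiv S_A\oplus\mathbbm{1}_B$, and for this $4\times4$ matrix one has $\tilde S_A^{-1}=(-S_A)\oplus\mathbbm{1}_B$, which is \emph{not} $-\tilde S_A$. Your one-sided factoring produces $\det\bigl[\tilde S_A^{-1}M(t)+M(1-t)\tilde S_A^T\bigr]$, and replacing $\tilde S_A^{-1}$ by $-\tilde S_A$ to get your $A(s)$ is illegitimate: only the $A$-block flips sign. Concretely, in block form the $BB$-entry of the bracket is $M_{BB}(t)+M_{BB}(1-t)$, already symmetric in $t$, so the putative relation $A(1-t)=-A(t)^T$ fails on that block. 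Thus the identity you rely on does not hold in the two-mode setting.

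The repair (and the paper's route) is to factor on \emph{both} sides:
\[
\det\bigl[M(t)+\tilde S_A M(1-t)\tilde S_A^T\bigr]
=\det\bigl[\tilde S_A^{-1}M(t)(\tilde S_A^T)^{-1}+M(1-t)\bigr],
\]
and then observe that $\tilde S_A^{-1}M(t)(\tilde S_A^T)^{-1}=[(-S_A)\oplus\mathbbm{1}]\,M(t)\,[(-S_A^T)\oplus\mathbbm{1}]=\tilde S_A M(t)\tilde S_A^T$, since the two minus signs in the $A$-block cancel while the $B$-block is untouched. This gives $\det[\tilde S_A M(t)\tilde S_A^T+M(1-t)]$, i.e.\ the original expression with $t\leftrightarrow 1-t$, and the symmetry follows. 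Everything else in your outline then goes through.
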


\begin{proof}
First let us notice that for any two quantum states $\rho_1$ and $\rho_2$ function $\tr\rho_1^t\rho_2^{1-t}$ is convex in $t$, which is proven in~\cite{Audenaert2007}. We will show that if the two states $\rho_1$ and $\rho_2$ satisfy the assumptions of the theorem, $\tr\rho_1^t\rho_2^{1-t}$ is symmetric with respect to exchange $t\rightarrow 1-t$. These two properties imply the theorem.

Let us show the symmetry with respect to exchange $t\rightarrow 1-t$. From~\cite{Pirandola2008} we know that for the states with vanishing first moments   $\tr\rho_1^t\rho_2^{1-t}$ is given in Eq.~(\ref{qubar}). The numerator of this formula is already symmetric with respect to exchange $t\rightarrow 1-t$ for the states related by any unitary transformation. To proof the theorem we only need to show the symmetry of the determinant in denominator of Eq.~(\ref{qubar}), $\det\left(V_1(t)+V_2(1-t)\right)$, for the states which have the covariance matrices $V_1(t)=V(t)$ and $V_2(1-t)=S_AV(1-t)S_A^T$.

Consider the determinant from the above formula
\begin{eqnarray}
&&\det\left(V(t)+S_AV(1-t) S_A^T\right)\\
&=&\det\left(S_A^{-1}V(t) (S_A^T)^{-1}+V(1-t)\right)\\
&=&\det\left(S_AV(t) S_A^T+V(1-t)\right).
\label{sigmaz}
\end{eqnarray}
The last equality is implied by the following argument. The form of the most general single-mode traceless symplectic transformation (Euler decomposition)~\cite{Adesso2007} is:
\begin{equation}
S_A=
\begin{bmatrix}
\cos {\phi } & \sin {\phi } \\
-\sin {\phi } & \cos {\phi }
\end{bmatrix}
\begin{bmatrix}
\xi & 0 \\
0 & \xi ^{-1}
\end{bmatrix}
\begin{bmatrix}
\cos {\theta } & \sin {\theta } \\
-\sin {\theta } & \cos {\theta }
\end{bmatrix}
.  \label{apps}
\end{equation}
where $\xi $ is positive. The traceless condition is obtained by imposing $\phi =\pi /2-\theta $. Immediate check gives us that $S_A^{-1}=-S_A$. The minus sign is irrelevant in expression $S_A^{-1}V(t) (S_A^T)^{-1}=(-S_A)\otimes \idty V(t) (-S_A^T)\otimes \idty$ and can be omitted. This completes the proof.
\end{proof}

To express the upper bound on the maximum probability of error for undisplaced Gaussian states of the covariance matrix given in  Eq.~(\ref{sqthcorm}) with $c_1=-c_2$  and its counterpart related to it by traceless symplectic transformations $S_A$ we maximize the quantum Chernoff bound over the set of these transformations. Formula Eq.~(\ref{appchs}) with $t=1/2$ is maximized if $\det{[V(1/2)+S_AV(1/2)S_A^T]}$ is minimized over the set $\{S_A\}$. The most general Gaussian single-mode unitary transformation is given in Eq.~(\ref{apps}). The determinant $\det{[V(1/2)+S_AV(1/2)S_A^T]}$ does not depend on $\phi $ and achieves its minimum for $\xi =1$. This can be proved by direct check of the first and second derivatives. Substituting
$\xi =1$ in Eq.~(\ref{apps}) yields the transformation which maximizes the quantum Chernoff bound, namely
$$S_A=\begin{bmatrix}
0 & 1 \\
-1 & 0 \; ,
\end{bmatrix}$$
which is the transformation corresponding to a local $\pi/2$ phase shift $F_{\pi /2}$.

\section{Distinguishability measures and thermal noise}

\label{Derivatives}

Here we discuss the derivatives of the distinguishability functions and their behavior.

For the quantum Chernoff bound, $QCB$, the derivatives over the entries of the covariance matrix, Eq.~(\ref{sqthcorm}), read
\begin{eqnarray}
\frac{\partial QCB}{\partial a}\big|_{c}&=&\frac{2 a c^2}{\left(c^2-2 a^2\right)^2}\label{dqda}\;,\\
\frac{\partial QCB}{\partial c}\big|_{a}&=&-\frac{2 a^2 c}{\left(c^2-2 a^2\right)^2}\label{dqdc}\;.
\end{eqnarray}
It is immediate to verify that the partial derivative of $QCB$, Eq.~(\ref{dqda}), is always positive and the partial derivative, Eq.~(\ref{dqdc}), is always negative. The derivatives of the Uhlmann fidelity are
\begin{widetext}
\begin{eqnarray}
\frac{\partial \c F}{\partial a}\big|_{c}&=&\frac{16 a \left(-a^2+c^2+\sqrt{a^4-2 \left(c^2-1\right) a^2+c^4+1}-1\right)}{\sqrt{a^4-2 \left(c^2-1\right) a^2+c^4+1} \left(\sqrt{a^4-2 \left(c^2-1\right) a^2+c^4+1}-\sqrt{\left(-a^2+c^2+1\right)^2}\right)^3}\; ,\label{dfda}\\
\frac{\partial \c F}{\partial c}\big|_{a}&=&-\frac{8 \left(2 (a-c) c (a+c)-2 c \sqrt{a^4-2 \left(c^2-1\right) a^2+c^4+1}\right)}{\sqrt{a^4-2 \left(c^2-1\right) a^2+c^4+1}
   \left(\sqrt{\left(-a^2+c^2+1\right)^2}-\sqrt{a^4-2 \left(c^2-1\right) a^2+c^4+1}\right)^3}\;.\label{dfdc}
\end{eqnarray}
\end{widetext}
The behavior of these rather complicated functions is reported graphically in Fig.~\ref{dfdx}.

\begin{figure}[!th]
\includegraphics[width=5cm]{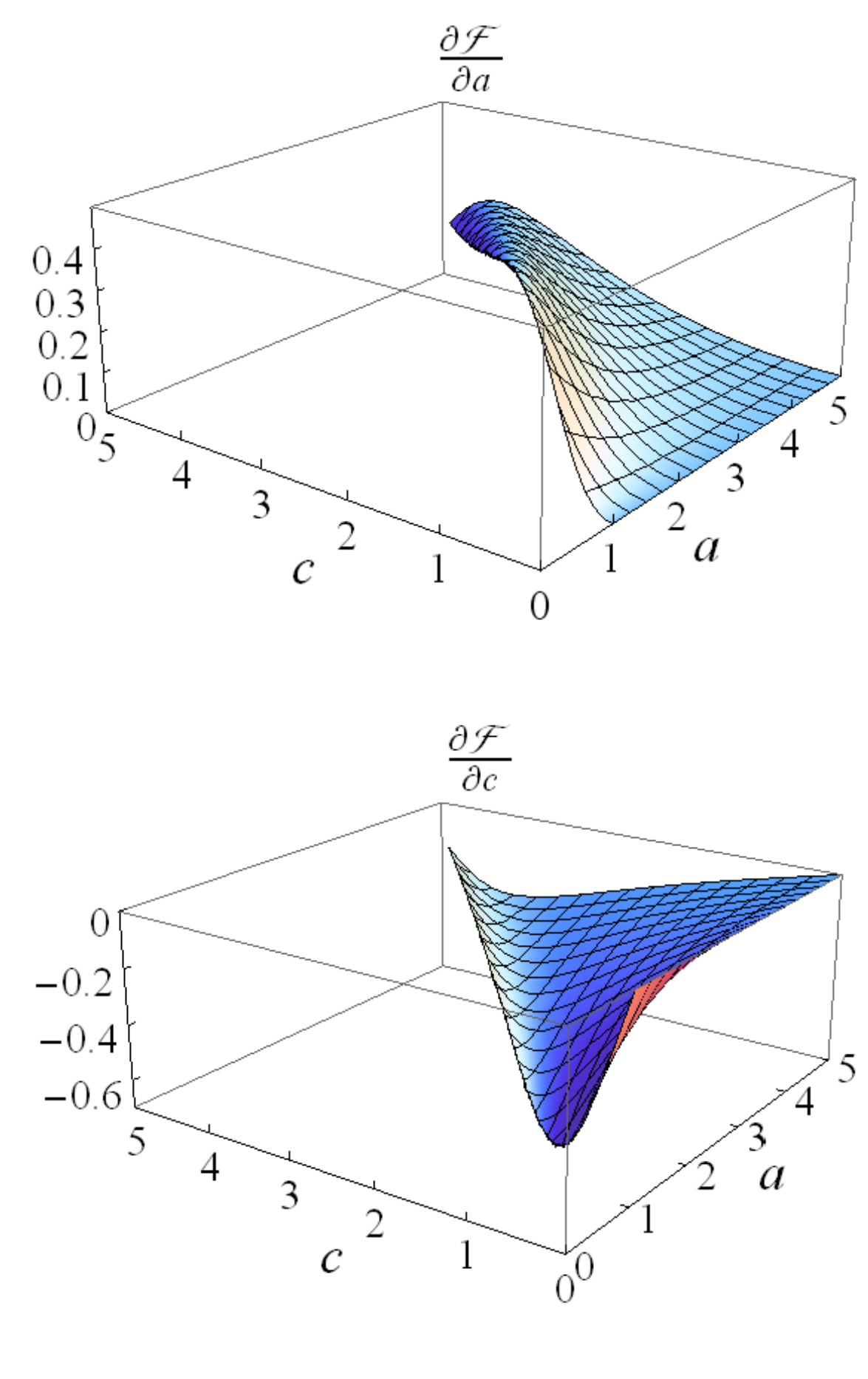}
\caption{Derivatives of the Uhlmann fidelity $\c F$ over the entries $a$ and $c$ of the covariance matrix, Eq.~(\ref{sqthcorm}), in the range of values corresponding to physical states, Eq.~(\ref{uncert}). The partial derivative $\frac{\partial \c F}{\partial a}\big|_{c}$ is always positive while   $\frac{\partial \c F}{\partial c}\big|_{a}$ is always negative.}
\label{dfdx}
\end{figure}

Also in the case of $\c F$ the derivative over the diagonal entry of the covariance matrix at constant off-diagonal elements is positive, while the derivative over the off-diagonal entries at constant diagonal entries is negative.


In the case of non-symmetric $STSs$ discussed in Sec.~\ref{noiseenhancedQCB}, the partial derivative of $QCB$, Eq.~(\ref{derivchernoffas}), with
respect to $N_{th_1}$ at constant $N_{th_2}$ and $r$ reads
\begin{equation}
\frac{d QCB}{d N_{th_1}}\big|_{N_{th_2},r}=-(N_{th_1}-N_{th_2})g \; ,
\label{derivovernth1}
\end{equation}
where
\begin{widetext}
\begin{equation}
g=\frac{8 (N_{th_1}+N_{th_2}+1) (2 N_{th_2}+1) \sinh ^2(2 r)}{\left(N_{th_1}^2-2 (7 N_{th_2}+3) N_{th_1}+(N_{th_2}-6)
   N_{th_2}-(N_{th_1}+N_{th_2}+1)^2 \cosh (4 r)-3\right)^2} \; .
\end{equation}
\end{widetext}
Since $g\geq 0$ the derivative, Eq.~(\ref{derivovernth1}), is positive only if $N_{th_1} < N_{th_2}$, negative only if $N_{th_1} > N_{th_2}$, and vanishes identically for symmetric STSs, namely for $N_{th_1} = N_{th_2}$, yielding a noise-independent $QCB$.

From the above results it follows that in the range  $N_{th_1} > N_{th_2}$ the $QCB$ is a monotonically decreasing function of the number of thermal photons in the first mode (increasing local thermal noise) and vanishes asymptotically, together with the probability of error, as $N_{th_1} \rightarrow \infty$.

\end{document}